\newcommand{\p}{\mathbb{P}}
\newcommand{\Reals}{\mathbb{R}}
\newcommand{\se}{{\sf se}}
\newcommand{\majority}{\text{majority}}
\newcommand{\minority}{\text{minority}}
\newcommand{\innocent}{\text{inn}}
\newcommand{\criminal}{\text{cri}}
\newcommand{\majo}{\text{maj}}
\newcommand{\mino}{\text{min}}
\newcommand{\systemic}{systematic~}
\newcommand{\doop}[1]{\text{do(}#1\text{)}}
\newtheorem{definition}{Definition}
\newtheorem{theorem*}{Theorem}
\newtheorem{proposition}{Proposition}
\newtheorem{lemma}{Lemma}
  \providecommand\BibTeX{{%
    \normalfont B\kern-0.5em{\scshape i\kern-0.25em b}\kern-0.8em\TeX}}}
\begin{document}

%%
%% The "title" command has an optional parameter,
%% allowing the author to define a "short title" to be used in page headers.
\title{A Causal Framework to Evaluate Racial Bias in Law Enforcement Systems}

%%
%% The "author" command and its associated commands are used to define
%% the authors and their affiliations.
%% Of note is the shared affiliation of the first two authors, and the
%% "authornote" and "authornotemark" commands
%% used to denote shared contribution to the research.

\author{Jessy Xinyi Han}
\email{xyhan@mit.edu}
\authornote{Corresponding author.}
\affiliation{%
  \institution{Massachusetts Institute of Technology}
  \country{USA}
 }
 
\author{Andrew Miller}
\affiliation{%
  \institution{U.S. Naval Academy}
  \country{USA}
 }
 
 \author{S. Craig Watkins}
\affiliation{%
  \institution{University of Texas at Austin}
  \country{USA}
 }

\author{Christopher Winship}
\affiliation{%
  \institution{Harvard University}
  \country{USA}
}

 \author{Fotini Christia}
\affiliation{%
  \institution{Massachusetts Institute of Technology}
  \country{USA}
 }

\author{Devavrat Shah}
\affiliation{%
  \institution{Massachusetts Institute of Technology}
  \country{USA}
 }

%%
%% By default, the full list of authors will be used in the page
%% headers. Often, this list is too long, and will overlap
%% other information printed in the page headers. This command allows
%% the author to define a more concise list
%% of authors' names for this purpose.
%\renewcommand{\shortauthors}{Trovato and Tobin, et al.}

%%
%% The abstract is a short summary of the work to be presented in the
%% article.
\begin{abstract}

We are interested in developing a data-driven method to evaluate race-induced biases in law enforcement systems. While the recent works have addressed this question in the context of police-civilian interactions using {\em police stop} data, they have two key limitations.
First, bias can only be properly quantified if true {\em criminality} is accounted for in addition to race, but it is absent in prior works\footnote{In our opinion, this is the primary reason for the disagreement between \cite{KLM} and \cite{Gaebler_Cai_Basse_Shroff_Goel_Hill_2020} in terms of \textcolor{black}{the identification strategies of} the causal estimand corresponding to racial bias and subsequently a not so pleasant public debate between them.}. 
Second, law enforcement systems are multi-stage and hence it is important to isolate the true source of bias within the ``causal chain of interactions'' rather than simply focusing on the end outcome; this can help guide reforms. 

In this work, we address these challenges by presenting a multi-stage causal framework incorporating criminality. We provide a theoretical characterization and an associated data-driven method to evaluate (a) the presence of any form of racial bias, and 
(b) if so, the primary source of such a bias in terms of race and criminality. 
Our framework identifies three canonical scenarios with distinct characteristics: in settings like 
(1) airport security, the primary source of observed bias against a race is likely to be bias in law enforcement against {\em innocents} of that race; 
(2) AI-empowered policing\footnote{\textcolor{black}{We do not intend to propose the use of AI tools for policing here. Any potential use of AI tools for policing should be heavily monitored and regulated.}}, the primary source of observed bias against a race is likely to be bias in law enforcement against {\em criminals} of that race; 
and (3) police-civilian interaction, the primary source of observed bias against a race could be bias in law enforcement against that race or bias from the general public in reporting (e.g. via 911 calls) against {\em the other} race. 
Through an extensive empirical study using police-civilian interaction (stop) data and 911 call data, we find an instance of such a counter-intuitive phenomenon: in New Orleans, the observed bias is {\em against} the majority race and the likely reason for it is the over-reporting (via 911 calls) of incidents involving the minority race by the general public. 
%
%That is, the police may not be the primary source for observed racial bias against majority, but it might be the biased public over reporting incidences involving minority race. % (even when there is no crime committed). 

\end{abstract}

%%
%% The code below is generated by the tool at http://dl.acm.org/ccs.cfm.
%% Please copy and paste the code instead of the example below.
%%

\begin{CCSXML}
<ccs2012>
   <concept>
       <concept_id>10010405.10010455.10010461</concept_id>
       <concept_desc>Applied computing~Sociology</concept_desc>
       <concept_significance>500</concept_significance>
       </concept>
   <concept>
       <concept_id>10010147.10010178.10010187.10010192</concept_id>
       <concept_desc>Computing methodologies~Causal reasoning and diagnostics</concept_desc>
       <concept_significance>500</concept_significance>
       </concept>
   <concept>
       <concept_id>10002951.10003227.10003351</concept_id>
       <concept_desc>Information systems~Data mining</concept_desc>
       <concept_significance>300</concept_significance>
       </concept>
   
 </ccs2012>
\end{CCSXML}

\ccsdesc[500]{Applied computing~Sociology}
\ccsdesc[500]{Computing methodologies~Causal reasoning and diagnostics}
\ccsdesc[300]{Information systems~Data mining}

%%
%% Keywords. The author(s) should pick words that accurately describe
%% the work being presented. Separate the keywords with commas.
\keywords{Causal Inference, Machine Learning for Social Science, Racial Disparity}

%% A "teaser" image appears between the author and affiliation
%% information and the body of the document, and typically spans the
%% page.

%\received{20 February 2007}
%\received[revised]{12 March 2009}
%\received[accepted]{5 June 2009}

%%
%% This command processes the author and affiliation and title
%% information and builds the first part of the formatted document.
\maketitle

\section{Introduction}

The role played by race, and more generally, gender, religion, and ethnicity, in decision-making within societal systems has garnered renewed interests over the past decade. Among various societal systems, law enforcement plays a crucial role in maintaining public safety and order but is, like any other, susceptible to systemic bias \cite{skogan_fairness_2004}. Racial disparities in law enforcement are believed to perpetuate cycles of poverty, exclusion, and marginalization of minority communities. Therefore, a thorough examination of racial disparities and guidance for reforms to reduce, or ideally eliminate, such disparities is required \cite{goff_racial_2012, hellman_measuring_2020, white_misdemeanor_2019}. 

Towards this, the research community has been focusing on benchmarking data, understanding the underlying causes of biases and proposing policy changes to mitigate them \cite{NEURIPS_DATASETS_AND_BENCHMARKS2021_7f6ffaa6, goncalves_few_2021, blair_community_2021}. 
In particular, racial disparities in law enforcement are often associated with a higher level of law enforcement actions against the discriminated race \cite{gelman2007analysis, ridgeway2007analysis}. Indeed, such an outcome specific inspection led \citep{Fryer2019} to conclude a lack of racial bias against minorities on lethal use of force. Subsequent works like \citep{KLM,Gaebler_Cai_Basse_Shroff_Goel_Hill_2020} have pointed out the need for a causal framework to correct for  bias in such analyses and proposed appropriate causal estimands to evaluate the presence of racial bias or disparity. However, these prior works have two key limitations. 

First, they do not explicitly account for the notion of {\em criminality}, i.e. whether a crime was committed. To evaluate racial disparity, it is important to identify whether the use of force was justified, which requires the context whether a crime was committed. Indeed, in our opinion, the absence of criminality in the prior works was the primary reason for the public debate to arise around the \textcolor{black}{identification} of causal estimand to evaluate racial bias or disparity (see \cite{debateblog}). 

Second, law enforcement is a multi-stage system. While the use of force by police is the eventual measurement to determine racial disparity, the incidents in which the police interact with civilians are determined by a sequence of decisions. This sequence includes which incidents are reported through 911 calls, how police are dispatched, and whether police stop civilians involved and eventually exert force against them. While \cite{KLM,Gaebler_Cai_Basse_Shroff_Goel_Hill_2020} have introduced one of the stages of this multi-stage system compared to \citep{Fryer2019}, further stages need to be introduced.\\

\noindent \textbf{Contributions.} In this work, we address these two challenges by presenting a multi-stage causal framework incorporating criminality. While criminality often remains unobserved or unverifiable, its inclusion in our framework enables us to define the appropriate causal estimand associated with racial disparity or bias. The proof is, however, in the pudding: the need of using {\em observed} data to evaluate the presence of racial disparity. 

Towards that, we posit the following natural hypothesis (or assumption): given the {\em context} which may represent social, economic and all other conditions that might lead an individual to commit a crime, the race of an individual and the act of committing a crime are (conditionally) independent. In other words, the reasons behind someone committing a crime are determined by the surrounding context, not by their race. 

Under this hypothesis, despite the notion of criminality being unobserved, we are able to define a test statistics of {\em observational racial disparity} (see \eqref{eq:delta.u} for a precise definition) using observed data only. The {\em observational racial disparity} evaluates whether there is racial disparity but falls short in identifying the primary cause of the disparity - which race is discriminated against, whether the criminals or innocents, and who is primarily responsible for such a discrimination. Indeed, in general such identification is beyond reach. However, we identify three canonical scenarios under which the primary cause of racial disparity can be discerned: 
 
%\begin{itemize}
 %   \item[$\circ$] 
\noindent {\em (1) Scenario 1} corresponds to settings like airport security checks, where everyone entering the system undergoes a check (i.e. interaction with law enforcement). In this scenario, if there is observational racial disparity based on the test statistic, it is primarily due to bias in law enforcement actions against innocent individuals of that race. See Section \ref{sec:mostly_innocent} for details. 

    %\item[$\circ$] 
\noindent {\em (2) Scenario 2} corresponds to settings like AI-empowered policing where an individual is suspected and put through further interaction with law enforcement only if there is a high chance of requiring further investigation. In this scenario, if there is observational racial disparity based on the test statistic, then it is primarily due to the bias in law enforcement actions against criminal individuals of that race. See Section \ref{sec:mostly_guilty} for details. 

    %\item[$\circ$] 
\noindent {\em (3) Scenario 3} corresponds to settings like police-civilian interactions where biases can arise from public reporting through 911 calls and/or law enforcement by police. In this scenario, we find that observed racial bias against one race may come from policing actions against that race or the general public's over-reporting of the other race. See Section \ref{sec:more_or_less} for details. 
%\end{itemize}

We apply our framework to the law enforcement system with police-civilian interaction using the police stop data and 911 call data for New York City (NYC) and New Orleans over a number of years. Through careful processing and stitching such data, we construct the multi-stage dataset of various incidences, corresponding to Scenario 3 shown above. Upon careful evaluation of the {\em observational racial disparity} for both NYC and New Orleans, we find the following: 

In NYC, there are observational biases against minority races (Black/Hispanic) compared to the majority (White), which could be explained by the racial disparity in policing actions. However, in New Orleans we find the exact opposite - an observational bias against the majority race as per the observed test statistic. This could be explained by the over-reporting of incidents involving minorities through 911 calls, as suggested by our framework under Scenario 3. This leads us to a counter-intuitive phenomenon - even though at its core there is bias against the minority (in reporting), the observational data suggests a bias against the majority! \\

\noindent{\bf Related Prior Works.}
A growing literature with causal inference has been developed to model multi-stage processes in sociology \cite{elwert_endogenous_2014}, healthcare \cite{huang_analysis_2012}, and political science \cite{slough_phantom_2022}. Among them, an active area of research is the investigation of racial bias in law enforcement processes. \textcolor{black}{Various methods have been been proposed \citep{doi:10.1146/annurev-criminol-011518-024731}, including the benchmark test \citep{gelman2007analysis}, outcome test \citep{Fryer2019}, threshold test \citep{simoiu2017problem} and so on. 
} 

Most previous works evaluating potential bias in law enforcement systems focus on individual stages of the process, including traffic stops \cite{pierson_large-scale_2020,antonovics_new_2009,grogger_testing_2006}, vehicle searches \cite{knowles_racial_2001,ridgeway_assessing_2006}, verbal communication \cite{voigt_language_2017}, arrests \cite{mitchell_examining_2015}, sentencing \cite{alesina_test_2014}, and use of force \cite{nix_birds_2017,kramer_stop_2018}, among other outcomes. Some recent studies do take a systematic approach to evaluate racial disparities. Specifically, \cite{KLM,Gaebler_Cai_Basse_Shroff_Goel_Hill_2020} utilize a causal framework to refine the simplistic approach in \citep{Fryer2019} to evaluate racial disparity based on the use of force by police in police-civilian interactions with the police stop data. However, these works have two limitations as mentioned earlier: they do not account for the notion of {\em criminality} and fall short of encompassing some stages of law enforcement, including how incidents are reported. 

When comparing our causal framework and the observational racial disparity statistics with \cite{KLM,Gaebler_Cai_Basse_Shroff_Goel_Hill_2020}, several key differences emerge. First, the causal frameworks in \cite{KLM,Gaebler_Cai_Basse_Shroff_Goel_Hill_2020} are subsets of our causal framework as they do not account for criminality and other stages of causal framework. Second, when the framework is restricted to the simpler setting in \cite{Gaebler_Cai_Basse_Shroff_Goel_Hill_2020, KLM}, our test statistics induces similar test statistics suggested in their works. However, \cite{Gaebler_Cai_Basse_Shroff_Goel_Hill_2020} requires an additional ignorability assumption to justify the identification of their statistics while our work naturally derives such justification from the hypothesis that race and criminality of an individual are conditionally independent given the context. Like \cite{Gaebler_Cai_Basse_Shroff_Goel_Hill_2020}, \cite{KLM} needs to make additional assumptions to justify their data-driven approach for point-identification of their causal estimand, unlike our work. Our framework identifies three canonical scenarios where observational racial disparity can be explained through the primary source of racial disparity across stages of law enforcement and the types of disparity (against race and criminals / innocents).

% The above-mentioned works that rely solely on police stop data have to leverage different ignorability assumptions to tackle the post-treatment bias caused by conditioning on stopping \cite{10.2307/2981697, Pearl+2015+131+137}, which is affected by the causal variable of interest, race. In this sense, our systematic framework provides a workaround to the post-treatment bias commonly encountered in multi-stage causal inference. 

Since differential crime reporting rates have significant impacts on policing systems \citep{DBLP:journals/corr/abs-2102-00128}, another salient feature of our work is the ability to leverage the multi-stage nature of law enforcement by including public reporting through 911 call data. Some existing studies have leveraged the reporting party to assess possible discrimination but they focus on 911 calls themselves and are primarily concerned with teasing out biases based on officer-level characteristics. For example, \cite{Hoekstra_Sloan2022} uses 911 calls as a natural experiment to assess the effect of officer race on use of force likelihood. \cite{weisburst_whose_2022} leverages 911 calls to evaluate how ``officer arrest propensity'' influences police discretion. These works do not leverage---nor aim to do so---the reporting decisions to measure levels of overall racial disparities in law enforcement actions.

~~\\
\noindent{\bf Organization.} The remainder of the article is organized as follows:
In Section \ref{sec:systemic}, we lay out a \systemic causal framework for evaluating bias in law enforcement. 
In Section \ref{ssec:priorworks}, we explain how prior works fits within this framework.
In Section \ref{sec:racial_disparity}, we discuss conditions under which we could identify the primary source of racial disparities in different law enforcement systems. 
In Section \ref{sec:empirics}, we explain our empirical strategy, analyze our data and present the results. 
In Section \ref{sec:discussion}, we conclude with a discussion on the potential extensions of our framework and reflect on the braoder implications of our work.

\section{A Causal Framework for Racial (Dis)Parity}\label{sec:systemic}

Law enforcement systems are complex and often involve multiple stages. To comprehend the various intricacies of these systems, including the presence of racial disparities, we need to closely study the inter-relationship between these stages. In this section, we introduce a multi-stage causal framework. 

\subsection{Causal Framework} 

 \begin{minipage}{\textwidth}
  \begin{minipage}[b]{0.49\textwidth}
    \centering
    \includegraphics[height=0.1\textheight, valign=t]{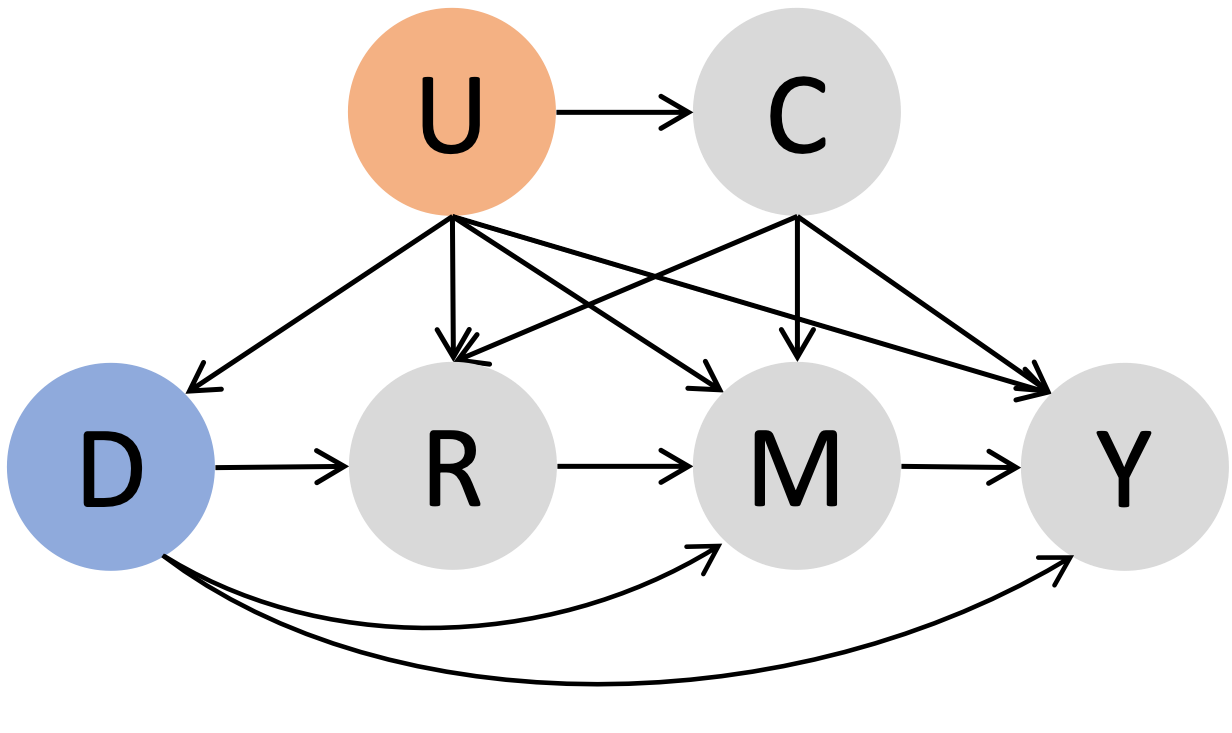}
    \captionof{figure}{A causal DAG correpsonding to multi-stage law enforcement system.} 
    \label{fig:systematic_DAG}
  \end{minipage}
  \hfill
  \begin{minipage}[b]{0.49\textwidth}
    \centering
    \begin{tabular}{|c|c|c|}
    \hline
        Notation & Meaning & Observed?\\ \hline
        $D$ & race & yes\\ \hline
        $R$ & report & yes\\ \hline
        $M$ & stop & yes\\ \hline
        $Y$ & policing outcome & yes\\ \hline
        $U$ & context & partially yes\\ \hline
        $C$ & criminality & no\\ \hline        
    \end{tabular}
    \captionof{table}{A summary of notations.}
    \label{tab:notations}
    \end{minipage}
  \end{minipage}

Consider a collection of incidents that demand the attention of law enforcement agencies. As mentioned earlier, law enforcement is a {\em multi-stage} process, involving not only the law enforcement agency but also the reporting party who may report only a portion of the incidents. Consequently, only the reported incidents will receive appropriate law enforcement attention and outcomes. For example, in airport security checks, individuals intending to board a plane self-report to Transportation Security Administration (TSA) officers and undergo mandatory security checks; in advanced AI-empowered policing, AI algorithms detect traces of criminal activities and report them to the police who later enforce policing actions; in a typical police-civilian encounter, as we find using empirical analysis (see Section \ref{sec:empirics} for details), that the majority of incidents that come to law enforcement's attention originate from 911 calls. Once the interaction is initiated in any of these scenarios, the law enforcement agents decide whether to stop and further question the suspects and, depending on the interaction, whether to proceed with potentially severe actions. We formalize this multi-stage law-enforcement process through a natural causal framework represented by a causal Directed Acyclic Graph (DAG) in Figure \ref{fig:systematic_DAG} with pertinent notations summarized in Table \ref{tab:notations}. 

\noindent
{\bf Incident.} An incident involves a person or people of interest, and we assume each incident involves a single person of interest. Let $D \in \{\majority, \minority\}$ denote the race (or type) of the individual associated with the incident. We specifically analyze the effect of race in this work.{\color{black}\footnote{Much scholarship on race has demonstrated the constructivist nature of racial categorizations \citep[]{sen_race_2016}. We are prevented from taking a constructivist approach given data limitations in large-scale law enforcement datasets. Moreover, causally-oriented literature on biases in law enforcement also generally takes a non-constructivist approach, so we follow suit in order to speak to that literature.}}

\noindent
{\bf Reporting.} An individual or incident is reported to the law enforcement system. We use $R = 1$ to indicate an incident is reported and $R = 0$ otherwise.

\noindent
{\bf Stop and Law Enforcement.} Following the reporting of an incident to the law enforcement agency, the individual involved may be stopped for further inquiry. Let $M = 1$ denote the incident that leads to a stop and $M = 0$ otherwise. If the stop occurs ($M = 1$), we denote by $Y = 1$ the enforcement of some subsequent action (e.g., verbal warning, use of force, citation, arrest) and $Y = 0$ otherwise. 

\noindent
{\bf Criminality.} Law enforcement systems aim to hold individuals accountable for crimes committed. Let $C \in \{0,1\}$ denote whether the individual commits the crime ($C = 1$) or not ($C = 0$). 

\noindent{\bf Context.} The context of each incident, such as the socio-economic characteristics and location, that may influence the involved civilian's race, reporting, stopping, and law enforcement actions. 
%Although other covariates, such as a civilian's income, education level, or bruises on face, may also contribute, they are not the focus of this study due to data limitations. 
We denote by $U \in \Reals^d$ the context that impacts the race $D$, criminality $C$, reporting $R$, stop $M$, and law-enforcement action $Y$. 

\noindent{\bf Causal Mechanism.} As noted, $U$ may influence $D, C, R, M$, and $Y$. The true nature of criminality $C$ obviously could affect the reporting $R$, stop $M$, and law-enforcement action $Y$. Reporting $R$ influences the stop $M$, and through $M$, it influences the law-enforcement action $Y$. The race $D$ may affect the reporting $R$, stop $M$, and law-enforcement action $Y$. The underlying model is represented in a causal Directed Acyclic Graph (DAG) in Figure \ref{fig:systematic_DAG}. The key (and only) assumption encoded in this causal mechanism is that there is {\em no edge} between $C, D$. That is, given 
{\em context} $U$, criminality $C$ and race $D$ are {\em conditionally independent}. In words, this encodes 

\begin{center}
    {\em it is the context that makes someone criminal (or not), not their race.}
\end{center}

%We further make the following assumption regarding the true nature of criminality $C$, race $D$ and context $U$:
%\begin{assumption}\label{assump:race}
%    Criminality $C$ and race $D$ are conditionally independent given the context $U$.
%\end{assumption}

%This assumption explicitly encodes that the likely correlation between $D$ and $C$ is through the confounder $U$, not direct causation. In other words, given the context, being of a certain race does not make a civilian a criminal.

\begin{comment}
\begin{figure}
\begin{floatrow}
\includegraphics[height=0.13\textheight, valign=t]{images/complete_DAG_with_c}
    \caption{A causal DAG representing the \systemic law enforcement framework. $D$: race of the individual involved in an incident; $R$: if the incident is reported; $M$: if a stop occurs; $Y$: if law-enforcement actions are taken; $U$: context of the incident; $C$: if the individual commits crime.}
   	\label{fig:systematic_DAG}
\capbtabbox{%
  \begin{tabular}{cc} \hline
  Author & Title \\ \hline
  Knuth & The \TeX book \\
  Lamport & \LaTeX \\ \hline
  \end{tabular}
}{%
  \caption{A table}%
}
\end{floatrow}
\end{figure}
\end{comment}

\subsection{Racial (Dis)Parity}\label{ssec:parity}

Given the causal DAG, we study the ``ideal'' racial parity to understand how the {\em intervention} of race $D$ affects the law-enforcement outcome $Y$\footnote{The notion of manipulating a person's race is understandably controversial. A possible solution is to shift from immutable traits to perceptions of them \citep[]{Greiner_Rubin_2011}. Thus, in this work, we model a scenario where two individuals in the same context are identical except for the perception of their races and understand whether this leads to different law-enforcement actions.}. Specifically, adopting the standard causal notation \citep[][]{Pearl_2014}, we define treatment parity among criminals and innocents.

\begin{definition}[Racial parity among criminals]\label{def:1}
Law enforcement has {\em racial parity among criminals} if %$\forall u \in \Reals^d$,
\begin{align}
\p^{\doop{D=\majority}}[ Y=1 \,\mid\, C=1, U=u] & = \p^{\doop{D=\minority}}[Y=1 \,\mid\, C=1, U=u], ~~\forall u \in \Reals^d \label{eq:parity.crime.do}.
\end{align}

\end{definition}

\begin{definition}[Racial parity among innocents] \label{def:2}
Law enforcement has {\em racial parity among innocents} if %$\forall u \in \Reals^d$,
\begin{align}
\p^{\doop{D=\majority}}[ Y=1 \,\mid\, C=0, U=u] & = \p^{\doop{D=\minority}}[Y=1 \,\mid\, C=0, U=u] , ~~\forall u \in \Reals^d \label{eq:parity.innocent.do}.
\end{align}
\end{definition}
\noindent Here, $\p^{\doop{D = d}}$ represents the distribution induced under the intervention of race $D = d$ for $d \in \{\majority, \minority\}$.  In this work, we aim to understand if either parity holds and, if not, whether there is bias against citizens from the $\majority$ or $\minority$ racial group with respect to criminals or innocents or both. {\color{black} Note that depending on the scenarios, there are many different definitions of fair treatment in literature \citep{10.1145/3194770.3194776, 10.5555/3294771.3294834, 10.5555/3294996.3295162}. The notion of parity introduced in Definition \ref{def:1} and \ref{def:2} is a counterfactual version of \emph{equalized odds} \citep{NIPS2016_9d268236}  \textcolor{black}{often used in criminology literature \citep{530d463c-e065-33e2-abfa-6f61af336c55}}.} Admittedly, analyzing these is not straightforward since many aspects are not fully observable, like criminality $C$ and context $U$. To achieve our goal, first, we discuss some invariances obeyed by such a \systemic causal framework. 

\subsection{Achieving Racial Parity Across Stages} \label{sec:compare_fairness}

Given the multi-stage causal framework and the ideal goal of achieving parity regarding race, we discuss certain structural invariances implied by our setup. 
As per the causal DAG in Fig.  \ref{fig:systematic_DAG}, we could represent the causal quantity $\p^{\doop{D=d}}[ Y=1 \,\mid\, C=c, U=u]$ in Eq. \ref{eq:parity.crime.do} and \ref{eq:parity.innocent.do} as the following and the proof is in Appendix Section \ref{appendix:prop_do_calculus}:

\begin{lemma}\label{lemma:calculation}
    $\forall u \in \Reals^d$, $c \in \{0,1\}$, and $d \in \{\minority, \majority\}$, 
\begin{align}
\p^{\doop{D=d}}[ Y=1 \,\mid\, C=c, U=u] &= 
\p[ Y=1, M=1 \,\mid\, R=1, D=d, C=c, U=u]\times \p[ R=1 \,\mid\, D=d, C=c, U=u].\label{eq:do.2} 
\end{align}
\end{lemma}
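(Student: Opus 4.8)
The plan is to prove the identity by writing the interventional distribution through the truncated (g-)factorization of the DAG in Figure~\ref{fig:systematic_DAG}, collapsing the resulting sum using two structural ``gating'' constraints of the model, and finally recombining factors via the chain rule.

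First I would record the factorization implied by the DAG. Reading off parents: $U$ is a root, $C$ has parent $U$ only (no edge from $D$, by the key conditional-independence assumption), $R$ has parents $\{U,C,D\}$, $M$ has parents $\{U,C,D,R\}$, and $Y$ has parents $\{U,C,D,M\}$ (note $R$ affects $Y$ only through $M$). The intervention $\doop{D=d}$ severs the single edge $U\to D$ and fixes $D=d$, so the interventional joint is obtained by deleting the factor for $D$ and substituting $D=d$ into the remaining conditionals. Since $D$ is not an ancestor of $C$ or $U$, the marginal $\p^{\doop{D=d}}[C=c,U=u]=\p[U=u]\,\p[C=c\mid U=u]$ is unchanged by the intervention; dividing the interventional joint by it and writing conditioning on $(U{=}u,C{=}c,D{=}d)$ in shorthand as $(u,c,d)$ gives
\begin{align*}
\p^{\doop{D=d}}[Y=1\mid C=c,U=u]=\sum_{r,m}\p[R=r\mid u,c,d]\,\p[M=m\mid u,c,r,d]\,\p[Y=1\mid u,c,m,d].
\end{align*}

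Next I would collapse this double sum using two deterministic implications built into the model: a policing outcome requires a stop ($Y=1\Rightarrow M=1$, hence $\p[Y=1\mid u,c,m{=}0,d]=0$), and a stop requires a report ($M=1\Rightarrow R=1$, hence $\p[M=1\mid u,c,r{=}0,d]=0$). Together these annihilate every term except $r=m=1$, leaving the product $\p[R=1\mid u,c,d]\cdot\p[M=1\mid u,c,r{=}1,d]\cdot\p[Y=1\mid u,c,m{=}1,d]$.

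Finally I would recombine the last two factors. By the chain rule, $\p[Y=1,M=1\mid R=1,D=d,C=c,U=u]=\p[M=1\mid u,c,r{=}1,d]\cdot\p[Y=1\mid u,c,r{=}1,m{=}1,d]$, and since $Y\perp R\mid M,U,C,D$ (the directed path $R\to M\to Y$ is blocked by conditioning on $M$, and every back-door path through $U$, $C$, or $D$ is blocked by conditioning on those variables), the last conditional reduces to $\p[Y=1\mid u,c,m{=}1,d]$. This matches the surviving product exactly, yielding the claimed identity. The main obstacle—indeed the only delicate point—is making the two gating constraints explicit and confirming they are genuine semantic features of the multi-stage model rather than mere absences of edges in the DAG; everything else is bookkeeping with the g-formula together with one routine d-separation check.
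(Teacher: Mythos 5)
Your proof is correct and follows essentially the same route as the paper's: expand over $(r,m)$, use the gating constraints $Y=1\Rightarrow M=1$ and $M=1\Rightarrow R=1$ to annihilate every term except $r=m=1$, and identify the surviving interventional conditionals with their observational counterparts via the DAG. Your explicit truncated-factorization derivation and closing d-separation check merely make rigorous the step the paper dispatches with ``by do-calculus''; note the d-separation is not even needed, since $\{M=1\}\subseteq\{R=1\}$ as events, so the extra conditioning on $R=1$ is vacuous.
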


The \eqref{eq:do.2} carries an important implication. On the one hand, the first conditional $\p[ Y=1, M=1 \,\mid\, R=1, D=d, C=c, U=u]$ measures the likelihood of an individual being stopped and subject to law-enforcement actions, given that the incident is reported and the context $U = u$ and criminality $C = c$ are controlled. As such, the first conditional primarily captures the role of law enforcement agents, such as police who are responsible for stops, use of force, and other related actions. On the other hand, the second conditional $\p[ R=1 \,\mid\, D=d, C=c, U=u]$ measures the probability of incidents being reported by the reporting party (e.g. 911 callers), given $U = u$ and $C = c$. This essentially captures the role of the reporting party in law enforcement systems. Together, these two conditionals suggest that racial parity in the entire system is implied by racial parity in each of these stages. This leads to the following definitions.  

\begin{definition}[Racial parity in law enforcement]\label{def:parity.police}
There is {\em racial parity in law enforcement} with respect to $\majority$ and $\minority$ citizens if $\forall u \in \Reals^d$ and $c \in \{0,1\}$, 
\begin{align}
&\p[Y=1, M=1 \,\mid\, R=1, D=\majority, C=c, U=u]
= \p[ Y=1, M=1 \,\mid\, R=1, D=\minority, C=c, U=u]\label{eq:parity.police}.
\end{align}
\end{definition}

\begin{definition}[Racial parity in reporting]\label{def:parity.public}
There is {\em racial parity in reporting} with respect to $\majority$ and $\minority$ citizens if $\forall u \in \Reals^d$ and $c \in \{0,1\}$, 
\begin{align}\label{eq:parity.public}
&\p[ R=1 \,\mid\, D=\majority, C=c, U=u] = \p[ R=1\,\mid\,  D=\minority, C=c, U=u].
\end{align}
\end{definition}
\noindent {Though different notions of fairness can often be incompatible, given the causal DAG in Fig. \ref{fig:systematic_DAG}, parity in public reporting leads to other desirable definitions of fairness like \emph{predictive parity} and \emph{group fairness} \citep{10.1145/3194770.3194776, DBLP:journals/corr/KleinbergMR16, https://doi.org/10.48550/arxiv.1610.07524}. Details of the proof can be found in Appendix Section \ref{appendix:fairness}.} 

The challenge with the above definitions is that criminality $C$ is never fully known. Therefore, it may never be feasible to definitely confirm (or reject) parity in either law enforcement or reporting under this framework. Despite this limitation, next, we discuss certain invariances involving observed quantities ought to be preserved. And, they can lead to test statistics for verifying racial parity. 

%However, the presence or absence of certain observable and verifiable parity reveals implications that partially signal the existence of parity or the lack thereof. We discuss such implications next.

\subsection{Verifying Racial Parity With Observed Data}\label{Sec:parity_parity}

The existence of racial parity as defined in Definitions \ref{def:parity.police} and \ref{def:parity.public} are not verifiable using observed quantities or data. However, they do imply certain invariances involving observed quantities only. We shall call them observational parity. 
% Suppose there is parity in both law enforcement and reporting. As discussed, we may be unable to verify it directly due to the lack of observations of $C$. However, observing all other aspects of the framework, i.e., $Y, M, R, D$, and $U$, allows us to evaluate a degree of observational parity with real-world data.
\begin{definition}[Observational parity in law enforcement]\label{def:observational.parity}
There is {\em observational racial parity in law enforcement} if %respect to the $\majority$ and $\minority$ population if  
\begin{small}
\begin{align}
\p[Y=1, M=1 \mid R=1, D=\majority, U=u] = \p[ Y=1, M=1 \mid R=1, D=\minority, U=u], ~~\forall u \in \Reals^d \label{eq:observational.parity}. 
\end{align}
\end{small}
\end{definition}
\noindent We similarly define {\em observational parity in law enforcement in context $u$} as the satisfaction of Eq.\eqref{eq:observational.parity} for the given context $u$. This observational parity suggests that a reported majority person and a reported minority person are equally likely to be stopped and later subject to law enforcement actions. We point out that key distinction between Definition \ref{def:parity.police}/\ref{def:parity.public} and observational parity in Definition \ref{def:observational.parity} is that the former consider criminality $C$ while the latter does not. The nuance enables us to directly estimate the probabilities in Eq.\eqref{eq:observational.parity} from observational data without requiring information on criminality. Towards that, we state the following result that connects racial parity (that can not be evaluated) with observational racial parity. The proof can be 
found in Appendix Section \ref{appendix:prop_parity_parity}.
\begin{proposition}\label{prop:parity_parity}
If there is racial parity in both law enforcement and reporting, there is observational parity in law enforcement. 
\end{proposition}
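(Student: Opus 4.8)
The plan is to express the observational quantity in Definition \ref{def:observational.parity}, which does \emph{not} condition on criminality, as a mixture over the two values of $C$ of the criminality-conditioned quantities appearing in Definitions \ref{def:parity.police} and \ref{def:parity.public}, and then argue that every ingredient of that mixture is independent of the race. First I would fix an arbitrary $u \in \Reals^d$ and $d \in \{\majority, \minority\}$ and apply the law of total probability, conditioning on $C$:
\[
\p[Y=1, M=1 \mid R=1, D=d, U=u] = \sum_{c \in \{0,1\}} \p[Y=1, M=1 \mid R=1, D=d, C=c, U=u]\, \p[C=c \mid R=1, D=d, U=u].
\]
By racial parity in law enforcement (Definition \ref{def:parity.police}), each factor $\p[Y=1, M=1 \mid R=1, D=d, C=c, U=u]$ does not depend on $d$, so it only remains to show that the posterior weight $\p[C=c \mid R=1, D=d, U=u]$ is likewise free of $d$.

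The heart of the argument is to control this posterior via Bayes' rule:
\[
\p[C=c \mid R=1, D=d, U=u] = \frac{\p[R=1 \mid C=c, D=d, U=u]\, \p[C=c \mid D=d, U=u]}{\sum_{c' \in \{0,1\}} \p[R=1 \mid C=c', D=d, U=u]\, \p[C=c' \mid D=d, U=u]}.
\]
Two facts then strip out the dependence on $d$: the structural assumption of no edge between $C$ and $D$ in Figure \ref{fig:systematic_DAG} yields $\p[C=c \mid D=d, U=u] = \p[C=c \mid U=u]$, and racial parity in reporting (Definition \ref{def:parity.public}) yields that $\p[R=1 \mid C=c, D=d, U=u]$ is independent of $d$. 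Hence both the numerator and the normalizing denominator are independent of $d$, so the posterior is as well. Substituting back, every factor in the mixture is race-invariant, so the left-hand side is independent of $d$, which is exactly the equality of Eq.~\eqref{eq:observational.parity} across $\majority$ and $\minority$; since $u$ was arbitrary this gives observational parity in law enforcement.

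The step I expect to be the main obstacle is the Bayes-rule computation for the posterior over criminality. It is tempting to believe observational parity follows by simply marginalizing $C$, but the marginalization weight $\p[C=c \mid R=1, D=d, U=u]$ is a posterior that has been tilted by the conditioning event $R=1$ and could \emph{a priori} depend on race through the reporting stage. It is precisely the conjunction of the structural conditional independence of $C$ and $D$ given $U$ with reporting parity that forces this posterior to be race-invariant; dropping either hypothesis would break the argument, which is what makes both assumptions essential rather than one of them sufficing on its own.
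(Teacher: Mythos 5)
Your proof is correct and follows essentially the same route as the paper's: both decompose $\p[Y=1,M=1\mid R=1,D=d,U=u]$ over criminality $C$ and then use reporting parity together with the structural conditional independence $C\perp D\mid U$ to show the weights are race-invariant, with law-enforcement parity handling the remaining factors. Your Bayes-rule presentation of the posterior weight is just a repackaging of the paper's chain-rule expansion of the numerator and denominator, and your closing remark correctly identifies why both hypotheses (plus the no-edge assumption between $C$ and $D$) are needed.
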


The above result suggests a natural statistical test. Specifically, for each context $u \in \Reals^d$, the \eqref{eq:observational.parity} needs to be satisfied if the racial parity holds. Therefore, violation of it for any context $u \in \Reals^d$ would imply the lack of racial parity. We formally state this next and its proof can be found in Appendix Section \ref{appendix:lemma_parity_parity}.
	
\begin{theorem*}
\label{lemma:parity_parity}
If $\exists u \in \Reals^d$ such that the observational parity in law enforcement in context $u$ is violated, i.e. \eqref{eq:observational.parity} does not hold, then there cannot be parity in law enforcement and reporting simultaneously in the given context $u$. 
\end{theorem*}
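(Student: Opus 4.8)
The plan is to prove the contrapositive, which is exactly a context-localized version of Proposition \ref{prop:parity_parity}. That is, I would fix the context $u$ at which \eqref{eq:observational.parity} is assumed to fail, suppose for the sake of argument that racial parity in law enforcement (Definition \ref{def:parity.police}) and racial parity in reporting (Definition \ref{def:parity.public}) \emph{both} hold in this context $u$, and then derive that observational parity in law enforcement must hold at $u$ — contradicting the hypothesis. Note that the proposition is stated globally (``for all $u$''), whereas the theorem concerns a single fixed context; however, the argument below proceeds context-by-context, so restricting it to the given $u$ delivers precisely the localized claim.

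The first step is to eliminate the unobserved criminality $C$ from the observational quantity by marginalizing it out. By the law of total probability,
\begin{align}
\p[Y=1, M=1 \vbar R=1, D=d, U=u] = \sum_{c \in \{0,1\}} \p[Y=1, M=1 \vbar R=1, D=d, C=c, U=u]\, \p[C=c \vbar R=1, D=d, U=u]. \nonumber
\end{align}
Under racial parity in law enforcement, the first factor $\p[Y=1, M=1 \vbar R=1, D=d, C=c, U=u]$ is independent of $d$ for each $c$. It therefore suffices to show that the posterior weight $\p[C=c \vbar R=1, D=d, U=u]$ is also independent of $d$; if both factors are $d$-invariant, the whole sum is $d$-invariant and observational parity at $u$ follows immediately.

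The main obstacle — and the real substance of the argument — is establishing the $d$-invariance of this criminality posterior. I would expand it via Bayes' rule,
\begin{align}
\p[C=c \vbar R=1, D=d, U=u] = \frac{\p[R=1 \vbar C=c, D=d, U=u]\, \p[C=c \vbar D=d, U=u]}{\p[R=1 \vbar D=d, U=u]}, \nonumber
\end{align}
and then dismantle the $d$-dependence piece by piece. The key DAG assumption that $C$ and $D$ are conditionally independent given $U$ (no edge between $C, D$) forces $\p[C=c \vbar D=d, U=u] = \p[C=c \vbar U=u]$, removing $d$ from the prior. Racial parity in reporting makes the likelihood $\p[R=1 \vbar C=c, D=d, U=u]$ independent of $d$. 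Finally, writing the denominator as $\p[R=1 \vbar D=d, U=u] = \sum_{c} \p[R=1 \vbar C=c, D=d, U=u]\, \p[C=c \vbar D=d, U=u]$ shows it is built entirely from these same two $d$-invariant quantities, hence is itself $d$-invariant.

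Combining the pieces, both the law-enforcement factor and the criminality posterior are independent of $d$, so $\p[Y=1, M=1 \vbar R=1, D=d, U=u]$ does not depend on $d$; that is, \eqref{eq:observational.parity} holds at $u$, contradicting the hypothesis and completing the proof. The only subtlety to guard against is that the Bayes' rule step requires all conditioning events (notably $\{R=1, D=d, U=u\}$ and $\{D=d, U=u\}$) to have positive probability at the fixed $u$; I would simply note that this is implicit wherever the conditional probabilities in \eqref{eq:observational.parity} and Definitions \ref{def:parity.police}--\ref{def:parity.public} are well-defined.
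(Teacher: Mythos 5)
Your proof is correct and takes essentially the same route as the paper: the paper proves this theorem simply as the contrapositive of Proposition \ref{prop:parity_parity}, whose proof likewise expands $\p[Y=1,M=1 \vbar R=1,D=d,U=u]$ over the unobserved criminality and cancels the $d$-dependence using exactly the three invariances you invoke (parity in law enforcement, parity in reporting, and the conditional independence of $C$ and $D$ given $U$). Your Bayes-posterior factorization of $\p[C=c \vbar R=1,D=d,U=u]$ is algebraically identical to the paper's ratio form of the same quantity, so the two arguments coincide up to presentation.
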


With parity in law enforcement and reporting, a reported individual should be equally likely to be stopped and later subject to law enforcement actions regardless of his/her race. In other words, if we find a lack of such observational parity, even {\em without observing} criminality $C$, the true parity considering criminality $C$ in either reporting or law enforcement must be violated. %
Given this, we define the following {\em test statistics}.
\begin{definition}[Observational racial disparity] \label{def.obs.racial.parity}
$\forall u \in \Reals^d$, let {\em observational racial disparity}, denoted as $\Delta(u)$, be 
\begin{align}
	\Delta(u) &:= \p[Y=1, M=1 \,\mid\, R=1, D=\majority, U=u] - \p[ Y=1, M=1 \,\mid\, R=1, D=\minority, U=u]\label{eq:delta.u}.
\end{align} 
\end{definition}
\noindent In effect, $\Delta(u)$ captures the difference in the likelihood of facing a law enforcement action such as use of force between races majority and minority assuming the individual is reported and stopped.

{\color{black} Note that we do not assume the composition of reported incidents $R=1$ to be the same as the composition of unreported incidents $R=0$. The purpose of evaluating \eqref{eq:observational.parity} and \eqref{eq:delta.u}, quantities conditioning on the post-treatment variable $R=1$, is purely for inferring the presence or absence of parity in law enforcement and reporting, which is our real interest. This is vastly different from treating \eqref{eq:observational.parity} as the ultimate fairness definition of interest, which is the problem motivating \cite{KLM}.}

\subsection{Situating Prior Works in Our Framework}\label{ssec:priorworks}

We discuss how prior works, specifically \citep{KLM,Gaebler_Cai_Basse_Shroff_Goel_Hill_2020}, fit within our framework. %Specifically, we shall focus on two closely related prior works of \citep{KLM,Gaebler_Cai_Basse_Shroff_Goel_Hill_2020}. 

\begin{figure}[h]
    \centering
    \begin{subfigure}[b]{0.26\textwidth}
        \includegraphics[width=\textwidth]{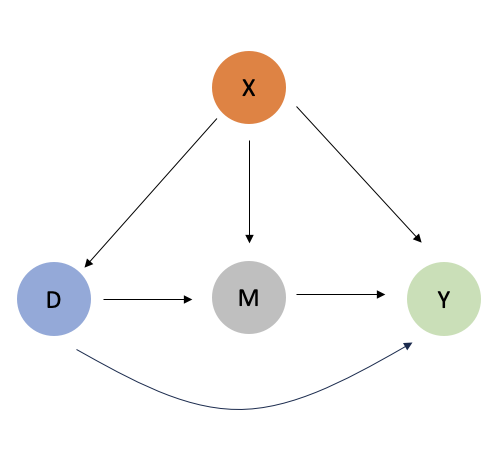}
        \caption{Causal DAG corresponding to prior works \cite{KLM, Gaebler_Cai_Basse_Shroff_Goel_Hill_2020}.}
        \label{fig:prior.dag1}
    \end{subfigure}
    \hfill
    \begin{subfigure}[b]{0.3\textwidth}
        \includegraphics[width=\textwidth]{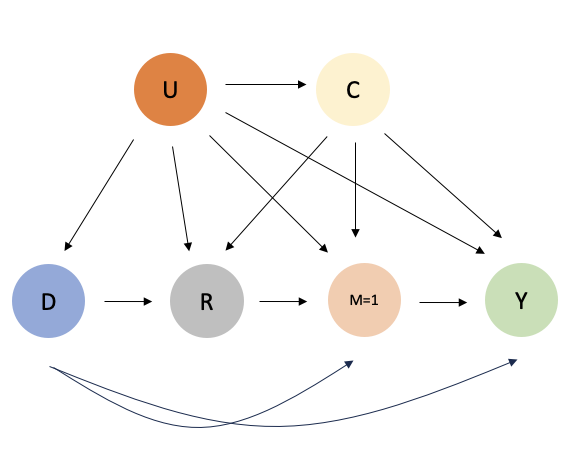}
        \caption{Reduced
    form of Causal DAG of this work.}
        \label{fig:prior.dag2}
    \end{subfigure}
    \hfill
    \begin{subfigure}[b]{0.26\textwidth}
        \includegraphics[width=\textwidth]{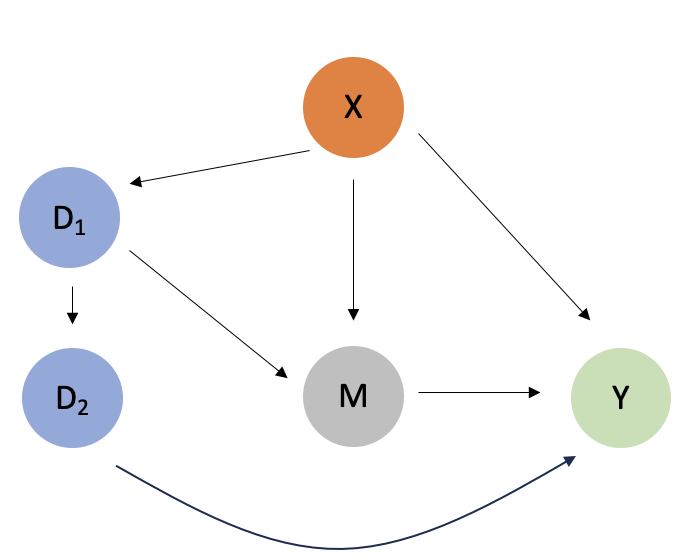}
        \caption{Our interpretation of causal DAG utilized by \cite{KLM} to define counterfactual notion of racial parity.}
        \label{fig:mod.priordag}
    \end{subfigure}
    \caption{Variations of DAGs to situate prior works in our framework.}
    \label{fig:three_images}
\end{figure}

\smallskip
\noindent{\em Comparing Causal DAGs.} As discussed earlier, these prior works have two limitations. First, they do not account for criminality explicitly. Second, accounting for multiple stages, specifically reporting, is absent. The causal DAG in these prior works can be abstracted as the one described in Figure \ref{fig:prior.dag1}. The causal DAG in Figure \ref{fig:prior.dag2} describes the {\em reduced} form of our mechanism where in DAG of Figure \ref{fig:systematic_DAG}, we set $M = 1$ with probability $1$. It can be seen that, with the exception of presence of $C$, the DAGs in Figure \ref{fig:prior.dag1} and Figure \ref{fig:prior.dag2} are identical with mapping $X \leftrightarrow U$ and $M \leftrightarrow R$.

\smallskip
\noindent{\em Comparing Notions of Racial Parity.} We recall the notions of racial parity in \cite{KLM, Gaebler_Cai_Basse_Shroff_Goel_Hill_2020} and compare them with the notion of racial parity of Sections \ref{ssec:parity} and \ref{sec:compare_fairness}. In \cite{KLM, Gaebler_Cai_Basse_Shroff_Goel_Hill_2020}, like in this work, racial parity is defined through a counterfactual notion induced through intervention on race. 

\smallskip
\noindent In \cite{Gaebler_Cai_Basse_Shroff_Goel_Hill_2020}, the racial parity is defined through causal
estimand $\text{CDE}_{\text{M}=1}$ defined as 
\begin{align}
\text{CDE}_{\text{M}=1} &= 
        \mathbb{P}^{do(D=\majority, M=1)}[Y=1|M=1, X=x] - \mathbb{P}^{do(D=\minority, M=1)}[Y=1|M=1, X=x].
\end{align}
Under Causal DAG of Figure \ref{fig:prior.dag1}, per the somewhat controversial ignorability assumption 
$Y^{do(D=d, M=1)} \perp D|X, M=1$ made in \cite{Gaebler_Cai_Basse_Shroff_Goel_Hill_2020}, it follows that 
\begin{align}
    \mathbb{P}^{do(D=d, M=1)}[Y=1|M=1, X=x] & = \mathbb{P}[Y=1|D=d, M=1, X=x].
\end{align}
That is, 
\begin{align}\label{eq:cde.map}
\text{CDE}_{\text{M}=1} &= \mathbb{P}[Y=1|D=\majority, M=1, X=x] - \mathbb{P}[Y=1|D=\minority, M=1, X=x].
\end{align}
Under the mapping described above and explained through Figure \ref{fig:prior.dag1} and Figure \ref{fig:prior.dag2}, it can be checked that \eqref{eq:cde.map} is equivalent to our test statistics in \eqref{eq:delta.u} as defined in Definition \ref{def.obs.racial.parity}.

\begin{comment}
\smallskip
\noindent In \cite{Gaebler_Cai_Basse_Shroff_Goel_Hill_2020}, the racial parity is defined through causal
estimand $\text{CDE}_{\text{M}=1}$ defined as 
\begin{align}
\text{CDE}_{\text{M}=1} &= 
        \mathbb{P}^{do(D=\majority, M=1)}[Y=1|M=1, X=x] - \mathbb{P}^{do(D=\minority, M=1)}[Y=1|M=1, X=x].
\end{align}
Under Causal DAG of Figure \ref{fig:prior.dag}(a), per do-calculus and the ignorability assumption 
$Y^{(D=1, M=1)} \perp D|X, M=1$ made in \cite{Gaebler_Cai_Basse_Shroff_Goel_Hill_2020}, it follows that 
\begin{align}
    \mathbb{P}^{do(D=d, M=1)}[Y=1|M=1, X=x] & = \mathbb{P}[Y=1|D=d, M=1, X=x].
\end{align}
That is, 
\begin{align}\label{eq:cde.map}
\text{CDE}_{\text{M}=1} &= \mathbb{P}[Y=1|D=\majority, M=1, X=x] - \mathbb{P}[Y=1|D=\minority, M=1, X=x].
\end{align}
Under the mapping described above and explained through Figure \ref{fig:prior.dag}(a) and Figure \ref{fig:prior.dag}(b), it can be checked that \eqref{eq:cde.map} is equivalent to our test statistics in \eqref{eq:delta.u} as defined in Definition \ref{def.obs.racial.parity}.
\end{comment}

\smallskip
\noindent Moreover, \cite{KLM} considers another notion of racial parity defined through causal estimand $\text{ATE}_{\text{M}=1}$. To define it, authors introduce a variation of the counterfactual intervention on race. 

While \cite{KLM} do not make this explicit, to explain their approach we introduce a ``modified'' causal DAG as in Figure \ref{fig:mod.priordag} which is obtained by splitting variable $D$ in Figure \ref{fig:prior.dag1} into two variables $D_1, D_2$ with edge between $D_1$ and $D_2$ encoding relationship $D_2 = D_1$. With respect to this DAG, the causal estimand $\text{ATE}_{\text{M}=1}$ can be defined as follows:
\begin{align}
	\text{ATE}_{\text{M}=1} &= \mathbb{P}^{do(D_2=\majority)}[Y=1|M=1, X=x] - \mathbb{P}^{do(D_2=\minority)}[Y=1|M=1, X=x].
\end{align}
While under the original causal DAG of Figure \ref{fig:prior.dag1}, with $D = D_1 = D_2$, the above would have reduced to a similar quantity as in 
$\text{CDE}_{\text{M}=1}$ of \eqref{eq:cde.map}, the modification makes it different. In particular, despite assuming it to be causal DAG, it can not be \textcolor{black}{point-identified} since we do not have two different $D_1, D_2$ in reality. Therefore, \textcolor{black}{\cite{KLM} take a partial identification approach towards $\text{ATE}_{\text{M}=1}$ and derive nonparametric sharp bounds}. %makes certain (in our opinion, questionable) assumption on the non computable parameter. %Subsequently, in our opinion, conclusions derived from such calculations remain questionable at the best. 

\smallskip
\noindent{\em In Summary.} The two important prior works \cite{KLM, Gaebler_Cai_Basse_Shroff_Goel_Hill_2020} can be explained through the
 reduced form of our framework. In the reduced form to match the prior work, our test statistics matches with that of \cite{Gaebler_Cai_Basse_Shroff_Goel_Hill_2020}. In that sense, ours can be viewed as a natural generalization of \cite{Gaebler_Cai_Basse_Shroff_Goel_Hill_2020} to account for criminality as well as multi-stage effect. The causal estimand of \cite{KLM} can also be explained within the same framework, however it requires \textcolor{black}{additional assumptions and data} to lead to \textcolor{black}{point-identification} using observed data.

\section{Measuring Racial Disparity: Law Enforcement, Reporting}\label{sec:racial_disparity}

The Theorem \ref{lemma:parity_parity} of Section \ref{sec:systemic} provides test statistics to verify whether there is racial parity or not. Specifically, 
if for any context $u \in \Reals^d$, if $\Delta(u)$ as defined in \eqref{eq:delta.u} is $\neq 0$, then racial disparity does exist. However, it does not help determine the primary source of disparity nor its intensity. 

Indeed, in general, it is infeasible to pinpoint the primary source and intensity of disparity given that not everything is observed. However, in this section, we shall discuss three canonical scenarios where it would be feasible to determine the primary source of disparity and intensity of racial disparity. 

To that end, we define the proportion of innocents ($C=0$) being reported ($R=1$) for a given context $u \in \Reals^d$ and race $d \in \{\minority, \majority\}$ 
as
\begin{align}\label{eq:prop.innoc}
    \xi_{d,u} & = \p[C=0\,\mid\, R=1,  D=d, U=u].
\end{align}
This quantity suggests three canonical scenarios: 

{\em Scenario 1 (aka Airport security checks): $\forall d \in \{\majority, \minority\}, \forall u \in \Reals^d, \xi_{d, u} \simeq 1  $,} i.e., most reported individuals are innocent. This naturally happens in settings such as airport security check with everyone coming to the airport voluntarily 
submits to security check (i.e. $R=1$). And indeed, most (if not all) are likely to be innocent.   

{\em Scenario 2 (aka Advanced AI-empowered policing): $\forall d \in \{\majority, \minority\}, \forall u \in \Reals^d, \xi_{d, u} \simeq 0$, } i.e., most reported individuals are criminals. This naturally arises in settings such as (not so futuristic) AI-empowered policing for various forms. Ideally, a good AI solution would highlight those incidents that are likely to have higher chance of crime. That is, if a good AI-system reports an 
incident (i.e. $R = 1$), then the chance of crime is likely to be higher. However, such usage of AI solutions in policing should always be cautioned and heavily monitored.

{\em Scenario 3 (aka Police-civilian interactions): $\forall d \in \{\majority, \minority\}, \forall u \in \Reals^d, 0 < \xi_{d, u} < 1$, } i.e., a reasonable fraction of reported individuals are criminals as well as innocents. This naturally arises in the settings such as police-civilian interaction, including traffic stops and stop-frisk. Specifically, a good fraction of incidences reported through 911 calls involve innocents. 

In what follows, we discuss how in each of these scenarios, under natural conditions how the primary source of racial disparity can be identified if $\Delta(u) \neq 0$ for some $u \in \Reals^d$. Towards that, we derive a characterization that would be very useful.

\subsection{Observational Racial Disparity: A Useful Characterization}

We present an alternate characterization of $\Delta(u)$ for $u \in \Reals^d$ involving $\xi_{d, u}$ with $d \in \{\majority, \minority\}$. Towards that, we define the following: for $u \in \Reals^d$ and $d \in \{\majority, \minority\}$, 
\begin{align}
Y_{\criminal, d, u} & = \p[ Y=1, M=1 \,\mid\, R=1, C=1, D=d, U=u], \nonumber \\ 
Y_{\innocent, d, u} & = \p[ Y=1, M=1 \,\mid\, R=1, C=0, D=d, U=u].
\end{align}
The above represent the probability of criminals and innocents, respectively, being subject to law enforcement actions.
\begin{theorem*}\label{thm: delta_representation}
For any $u\in\Reals^d$, the observational racial disparity in law enforcement can be represented as 
\begin{align}
	\label{eq:delta.u.char}
\Delta(u) &=
	    \Big (Y_{\innocent, \majo, u}\cdot \xi_{\majo, u}+Y_{\criminal, \majo, u}\cdot (1-\xi_{\majo, u})\Big )-\Big ({Y_{\innocent, \mino, u}\cdot\xi_{\mino, u}} +Y_{\criminal, \mino, u}\cdot (1-\xi_{\mino, u})\Big ).
\end{align}
%where  $Y_{\criminal, d, u} = \p[ Y=1, M=1 \,\mid\, R=1, C=1, D=d, U=u]$, $Y_{\innocent, d, u} = \p[ Y=1, M=1 \,\mid\, R=1, C=0, D=d, U=u]$ 
%represent the probability of criminals and innocents, respectively, being subject to law enforcement actions.
\end{theorem*}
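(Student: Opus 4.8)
The plan is to prove the representation one race at a time and then subtract, since by \eqref{eq:delta.u} the quantity $\Delta(u)$ is exactly the difference between the two race-specific conditionals. Fix a race $d \in \{\majority, \minority\}$ and a context $u \in \Reals^d$. The central observation is that $\p[Y=1, M=1 \vbar R=1, D=d, U=u]$ conditions on race, reporting, and context, but leaves the unobserved criminality $C$ unspecified; since $C$ is binary, I would recover it by the law of total probability applied \emph{conditionally} on the event $\{R=1, D=d, U=u\}$. Concretely,
\begin{align*}
\p[Y=1, M=1 \vbar R=1, D=d, U=u] &= \sum_{c \in \{0,1\}} \p[Y=1, M=1 \vbar R=1, C=c, D=d, U=u]\cdot \p[C=c \vbar R=1, D=d, U=u].
\end{align*}

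The next step is pure bookkeeping: I would read off each term from the definitions already in the excerpt. The $c=0$ summand is $Y_{\innocent, d, u}\cdot \p[C=0 \vbar R=1, D=d, U=u] = Y_{\innocent, d, u}\cdot \xi_{d,u}$ by \eqref{eq:prop.innoc}, and the $c=1$ summand is $Y_{\criminal, d, u}\cdot \p[C=1 \vbar R=1, D=d, U=u] = Y_{\criminal, d, u}\cdot (1-\xi_{d,u})$, where I use that $C$ is binary so the two weights sum to one. This yields the per-race identity $\p[Y=1, M=1 \vbar R=1, D=d, U=u] = Y_{\innocent, d, u}\cdot \xi_{d,u} + Y_{\criminal, d, u}\cdot(1-\xi_{d,u})$. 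Instantiating this at $d=\majority$ and $d=\minority$ and subtracting the latter from the former reproduces \eqref{eq:delta.u.char} verbatim.

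I do not expect any genuine obstacle here: unlike Lemma \ref{lemma:calculation} and Proposition \ref{prop:parity_parity}, this statement involves no $\doop{\cdot}$ intervention and no appeal to the conditional independence of $C$ and $D$ encoded in Figure \ref{fig:systematic_DAG}. Every term in both $\Delta(u)$ and its claimed representation is an \emph{observational} conditional on the same event $\{R=1, D=d, U=u\}$, so the result is a purely algebraic marginalization over $C$ rather than a causal identification. The only point meriting a word of care is positivity: the decomposition requires $\p[R=1, D=d, U=u] > 0$ and, for each nondegenerate term, $\p[R=1, C=c, D=d, U=u] > 0$. In the boundary cases $\xi_{d,u} \in \{0,1\}$ one of the conditionals $Y_{\innocent,d,u}$ or $Y_{\criminal,d,u}$ conditions on a null event, but its coefficient is then $0$, so I would adopt the usual convention that a term with zero weight is dropped, and the identity continues to hold.
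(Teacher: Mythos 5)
Your proof is correct and follows essentially the same route as the paper's: a law-of-total-probability expansion over the binary criminality $C$ conditional on $\{R=1, D=d, U=u\}$, followed by the chain rule to read off $Y_{\innocent,d,u}\cdot\xi_{d,u} + Y_{\criminal,d,u}\cdot(1-\xi_{d,u})$ for each race and subtracting. Your added remark on positivity and the degenerate cases $\xi_{d,u}\in\{0,1\}$ is a sensible clarification the paper omits, but it does not change the argument.
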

The proof of Theorem \ref{thm: delta_representation} can be found in Appendix Section \ref{appendix:delta_representation_proof}.

\subsection{Scenario 1 (aka Airport security checks)}\label{sec:mostly_innocent}
%Public Reporting Scenario 1: Mostly Innocent} 

As discussed earlier, in the setting such as airport security checks, where individuals self-report themselves and are later checked by TSA officers, 
it is reasonable to assume that most people passing through these security checks are innocent, not criminals. In our notation, this suggests that no matter what the context $u$ and race $d$ is, $\xi_{d, u} \simeq 1$. Therefore, by characterization of \eqref{eq:delta.u.char}, we have $ \Delta(u) \simeq Y_{\innocent, \majo, u}-Y_{\innocent, \mino, u}.$ Formally, we state the following.
\begin{proposition}\label{prop:scenario.1}
For each $d \in \{\majority, \minority\}$ and $u \in \Reals^d$, let $\xi_{d, u}\to 1$. Then, 
\begin{align}
\Delta(u) < 0 & \Leftrightarrow  Y_{\innocent, \mino, u} > Y_{\innocent, \majo u}. 
\end{align}
\end{proposition}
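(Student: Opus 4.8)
The plan is to read the conclusion off directly from the closed form in Theorem~\ref{thm: delta_representation}, viewing $\xi_{\majo,u}\to 1$ and $\xi_{\mino,u}\to 1$ as a joint limit taken with the four response probabilities held fixed. First I would record that each of $Y_{\innocent,\majo,u}$, $Y_{\criminal,\majo,u}$, $Y_{\innocent,\mino,u}$, $Y_{\criminal,\mino,u}$ is a conditional probability and hence lies in $[0,1]$; this boundedness is exactly what forces the criminal contributions to decay in the limit.

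Next I would substitute these into the representation \eqref{eq:delta.u.char} and track the two ``criminal'' summands $Y_{\criminal,\majo,u}\,(1-\xi_{\majo,u})$ and $Y_{\criminal,\mino,u}\,(1-\xi_{\mino,u})$. Since $1-\xi_{d,u}\to 0$ while each $Y_{\criminal,d,u}\in[0,1]$ stays bounded, both products tend to $0$; simultaneously $Y_{\innocent,d,u}\,\xi_{d,u}\to Y_{\innocent,d,u}$ because $\xi_{d,u}\to 1$. As the right-hand side of \eqref{eq:delta.u.char} is a finite sum of terms jointly continuous in $(\xi_{\majo,u},\xi_{\mino,u})$, the limit of the sum is the sum of the limits, so $\Delta(u)\to Y_{\innocent,\majo,u}-Y_{\innocent,\mino,u}$.

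Finally I would convert this limiting identity into the stated equivalence. With $\Delta(u)\simeq Y_{\innocent,\majo,u}-Y_{\innocent,\mino,u}$, the sign of $\Delta(u)$ matches the sign of $Y_{\innocent,\majo,u}-Y_{\innocent,\mino,u}$, whence $\Delta(u)<0$ holds exactly when $Y_{\innocent,\mino,u}>Y_{\innocent,\majo,u}$; the boundary case $Y_{\innocent,\mino,u}=Y_{\innocent,\majo,u}$ gives $\Delta(u)=0$, so both sides of the biconditional are false and it still holds.

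I do not expect a genuine obstacle, as this is a direct limit evaluation of the formula in Theorem~\ref{thm: delta_representation}. The only point needing care is the interpretation of the limit: ``$\xi_{d,u}\to 1$'' must be understood as a joint limit over both races while the $Y_{\cdot,d,u}$ are treated as fixed, and it is the boundedness of the criminal terms (rather than any cancellation) that guarantees convergence. Stating explicitly that all four $Y$-quantities are probabilities in $[0,1]$ resolves this cleanly.
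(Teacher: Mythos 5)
Your proposal is correct and follows essentially the same route as the paper's own proof: both substitute the limits $\xi_{d,u}\to 1$ into the representation of Theorem~\ref{thm: delta_representation}, observe that the criminal terms vanish while the innocent terms survive, and read off $\Delta(u)\to Y_{\innocent, \majo, u}-Y_{\innocent, \mino, u}$. Your added remarks on the boundedness of the $Y$-quantities and the interpretation of the joint limit are reasonable elaborations of steps the paper leaves implicit, but they do not constitute a different argument.
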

Proof can be found in Appendix Section \ref{appendix:mostly_innocent}. \\

\noindent {\em Key takeaway:} in scenario 1, if there is observational disparity against the minority (resp. majority) in law enforcement, it is primarily due to bias in law enforcement actions against the minority (resp. majority) innocents. That is, $\Delta(u)<0 \Leftrightarrow Y_{\innocent, \mino, u} > Y_{\innocent, \majo u}.$  

\subsection{Scenario 2 (aka Advanced AI-empowered policing)}\label{sec:mostly_guilty}

Consider a (not so futuristic) scenario where the reporting system is enhanced with physical sensors and detectors that use accurate artificial intelligence (AI) \footnote{\textcolor{black}{We do not intend to advocate for the implementation of AI technologies in law enforcement. Also, we refer to ``AI'' in a broader, more generalizable context. For example, it's akin to using automated systems like motion-sensor lights that automatically turn off after five minutes of inactivity in unoccupied conference rooms.}}. For example, an advanced gunshot detection system can use AI to recognize a blast received by the acoustic sensors as a gunshot and then pinpoint its exact location to alert the police about this potential gun-related violation or crime. Similarly, a face recognition system that implements Computer Vision algorithms to recognize wanted criminals from the street surveillance footage can report them to the police. With the help of AI, these systems often accurately identify criminals, especially those who have committed felonies. In our notation, this suggests that no matter what the context $u$ and race $d$ is, $\xi_{d, u} \simeq 0$. Therefore, by characterization of \eqref{eq:delta.u.char}, we should have $\Delta(u) \simeq Y_{\criminal, \majo, u}-Y_{\criminal, \mino, u}.$ Formally, we state the following.
\begin{proposition}\label{prop:scenario.2}
For each $d \in \{\majority, \minority\}$ and $u \in \Reals^d$, let $\xi_{d, u}\to 0$. Then, 
\begin{align}
\Delta(u)<0 & \Leftrightarrow Y_{\criminal, \mino, u} > Y_{\criminal, \majo u}. 
\end{align}
\end{proposition}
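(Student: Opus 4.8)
The plan is to obtain the proposition as a direct limiting specialization of the characterization in Theorem~\ref{thm: delta_representation}. That theorem already expresses $\Delta(u)$ as a convex combination of the criminal- and innocent-level law-enforcement probabilities, weighted by $\xi_{d,u}$ and $1-\xi_{d,u}$ for each race $d$. Since the regime $\xi_{d,u}\to 0$ is precisely the hypothesis here, the entire argument reduces to tracking what the representation collapses to in this limit and then reading off the sign equivalence. This mirrors the structure of Proposition~\ref{prop:scenario.1}, with the roles of the innocent and criminal probabilities interchanged because $\xi_{d,u}$ now tends to $0$ rather than $1$.

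First I would recall the identity
\begin{align}
\Delta(u) &= \Big(Y_{\innocent,\majo,u}\,\xi_{\majo,u}+Y_{\criminal,\majo,u}\,(1-\xi_{\majo,u})\Big) - \Big(Y_{\innocent,\mino,u}\,\xi_{\mino,u}+Y_{\criminal,\mino,u}\,(1-\xi_{\mino,u})\Big),
\end{align}
and observe that each of $Y_{\innocent,d,u}$ and $Y_{\criminal,d,u}$ is a conditional probability, hence lies in $[0,1]$. Consequently the two innocent terms satisfy $0\le Y_{\innocent,d,u}\,\xi_{d,u}\le \xi_{d,u}$, so they vanish as $\xi_{\majo,u},\xi_{\mino,u}\to 0$, while the coefficients $1-\xi_{d,u}\to 1$ multiply the bounded factors $Y_{\criminal,d,u}$. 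Taking the joint limit therefore gives
\begin{align}
\Delta(u) &\longrightarrow Y_{\criminal,\majo,u}-Y_{\criminal,\mino,u}.
\end{align}

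With the limiting identity $\Delta(u)=Y_{\criminal,\majo,u}-Y_{\criminal,\mino,u}$ in hand, I would conclude by a one-line sign comparison: $\Delta(u)<0$ holds precisely when $Y_{\criminal,\majo,u}<Y_{\criminal,\mino,u}$, i.e.\ when $Y_{\criminal,\mino,u}>Y_{\criminal,\majo,u}$, which is the claimed equivalence.

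The main obstacle is less a technical difficulty than a matter of stating the limit cleanly: the equivalence is an exact statement about the limiting value of $\Delta(u)$, so I would make explicit that both weights $\xi_{\majo,u}$ and $\xi_{\mino,u}$ are sent to $0$ simultaneously, and that the boundedness of all four $Y$ terms is what legitimizes dropping the innocent contributions and replacing $1-\xi_{d,u}$ by $1$. Once this bookkeeping is in place the sign equivalence is immediate, so essentially all of the substantive work has already been carried out in establishing Theorem~\ref{thm: delta_representation}.
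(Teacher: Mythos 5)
Your proposal is correct and follows essentially the same route as the paper's own proof: both start from the characterization of $\Delta(u)$ in Theorem~\ref{thm: delta_representation}, send $\xi_{\majo,u},\xi_{\mino,u}\to 0$ so that the innocent terms vanish and $\Delta(u)\to Y_{\criminal,\majo,u}-Y_{\criminal,\mino,u}$, and then read off the sign equivalence. Your added remark that boundedness of the $Y$ terms justifies dropping the innocent contributions is a small but welcome piece of explicitness that the paper leaves implicit.
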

Proof can be found in Appendix Section \ref{appendix:mostly_guilty}.\\

\noindent{\em Key takeaway:} in scenario 2, if there is observational disparity against the minority (resp. majority) in law enforcement, it is primarily due to bias in law enforcement actions against minority (resp. majority) criminals.  

\subsection{Scenario 3 (aka Everyday police-civilian interactions)}\label{sec:more_or_less}

Now consider a scenario, such as police-civilian interactions, where the reporting is initiated through 911 calls. In this case, a substantial fraction of reporting corresponds to some form of criminal activity, but there is also a non-trivial amount of human error. That is, $\xi_{d, u}$ is neither close to $0$, nor close to $1$. Understanding the source of observational disparity can become rather challenging in this setting. There are three factors to consider: (a) the accuracy of the public 911 reporting, (b) policing actions on criminals, and (c) policing actions on innocents. Imposing restrictions on any two of these three factors can help us understand the other factor's impact in inducing observational disparity. We do that next. 

\noindent{\bf Case 1.} Consider the setting where public reporting is unbiased and policing actions against criminals are unbiased. That is, the 911-reported civilians of different races are equally likely to be innocents (${\xi_{\majo, u}} \simeq {\xi_{\mino, u}}$), and criminals of different races are equally likely to be subject to policing actions (${Y_{\criminal, \majo, u}}\simeq {Y_{\criminal, \mino, u}}$). Therefore, by characterization of \eqref{eq:delta.u.char}, we have $\Delta(u) \simeq \xi_{\majo, u}(Y_{\innocent, \majo, u}-Y_{\innocent, \mino, u})$. Formally, we state the following.
\begin{proposition}\label{prop:scenario.3.1}
For each $d \in \{\majority, \minority\}$ and $u \in \Reals^d$, let ${\xi_{\majo, u}} - {\xi_{\mino, u}} \to 0$ and 
${Y_{\criminal, \majo, u}} - {Y_{\criminal, \mino, u}} \to 0$. Then, 
\begin{align}
\Delta(u) & \to \xi_{\majo, u}(Y_{\innocent, \majo, u}-Y_{\innocent, \mino, u}). 
\end{align}
\end{proposition}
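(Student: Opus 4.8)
The plan is to read the result directly off the characterization of $\Delta(u)$ in Theorem \ref{thm: delta_representation} and then pass to the limit prescribed by the two hypotheses. Writing $\delta_\xi := \xi_{\majo, u} - \xi_{\mino, u}$ and $\delta_Y := Y_{\criminal, \majo, u} - Y_{\criminal, \mino, u}$, both of which tend to $0$, I would note at the outset that all six quantities appearing in \eqref{eq:delta.u.char} are probabilities confined to $[0,1]$, so each remains bounded along the limiting process and limits of the relevant sums and products may be taken termwise without any analytic difficulty.

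The core step is to form the difference between $\Delta(u)$ and its claimed limit and show it vanishes. Starting from
\begin{align*}
\Delta(u) &= \Big(Y_{\innocent, \majo, u}\,\xi_{\majo, u} + Y_{\criminal, \majo, u}\,(1-\xi_{\majo, u})\Big) - \Big(Y_{\innocent, \mino, u}\,\xi_{\mino, u} + Y_{\criminal, \mino, u}\,(1-\xi_{\mino, u})\Big),
\end{align*}
I would subtract $\xi_{\majo, u}\,(Y_{\innocent, \majo, u} - Y_{\innocent, \mino, u})$, substitute $\xi_{\mino, u} = \xi_{\majo, u} - \delta_\xi$ and $Y_{\criminal, \mino, u} = Y_{\criminal, \majo, u} - \delta_Y$, and cancel the matching innocent-majority contribution. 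After regrouping, the residual collapses to $\delta_Y(1-\xi_{\majo, u}) - (Y_{\criminal, \majo, u} - \delta_Y)\,\delta_\xi + Y_{\innocent, \mino, u}\,\delta_\xi$, that is, a sum in which every summand is one of the vanishing gaps ($\delta_\xi$ or $\delta_Y$) multiplied by a bounded probability.

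Since $\delta_\xi \to 0$ and $\delta_Y \to 0$ while all accompanying factors stay within $[0,1]$, each summand tends to $0$, so the residual tends to $0$ and hence $\Delta(u) \to \xi_{\majo, u}(Y_{\innocent, \majo, u} - Y_{\innocent, \mino, u})$, as claimed. There is no genuine obstacle in this argument; the only point deserving care is to carry the cancellation through the explicit residual expression rather than simply setting $\xi_{\mino, u} = \xi_{\majo, u}$ and $Y_{\criminal, \mino, u} = Y_{\criminal, \majo, u}$ outright, since the latter shortcut would verify the identity only in the exact limit and would not, strictly speaking, justify the convergence statement as written.
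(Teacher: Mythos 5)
Your proof is correct and takes essentially the same route as the paper's: both start from the characterization \eqref{eq:delta.u.char} and regroup algebraically so that $\Delta(u)$ minus the claimed limit is a sum of terms, each a vanishing gap ($\xi_{\majo,u}-\xi_{\mino,u}$ or $Y_{\criminal,\majo,u}-Y_{\criminal,\mino,u}$) multiplied by a probability bounded in $[0,1]$. Your explicit residual $\delta_Y(1-\xi_{\majo,u})-(Y_{\criminal,\majo,u}-\delta_Y)\,\delta_\xi+Y_{\innocent,\mino,u}\,\delta_\xi$ is exactly the paper's add-and-subtract regrouping written out, so nothing further is needed.
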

Proof can be found in Appendix Section \ref{appendix:proof3.1}.\\

\noindent{\em Key takeaway:} in scenario 3 case 1, there is observational disparity against the minority in law enforcement if and only if the minority innocents are more often subject to policing actions. \\
%, i.e., $\Delta(u)<0 \Leftrightarrow Y_{\innocent, \mino, u} > Y_{\innocent, \majo u}$. \\

\noindent{\bf Case 2.} Consider the setting where public reporting is unbiased and policing action for innocents is unbiased. 
That is, the 911-reported civilians of different races are roughly equally likely to be innocents (${\xi_{\majo, u}} \simeq {\xi_{\mino, u}}$), and innocents of different races are equally likely to be subjected to policing actions (${Y_{\innocent, \majo, u}}\simeq {Y_{\innocent, \mino, u}}$).  Therefore, by characterization of \eqref{eq:delta.u.char}, we have
$\Delta(u) \simeq (1-\xi_{\majo, u}) (Y_{\criminal, \majo, u} - Y_{\criminal, \mino, u}).$
Formally, we state the following.
\begin{proposition}\label{prop:scenario.3.2}
For each $d \in \{\majority, \minority\}$ and $u \in \Reals^d$, let ${\xi_{\majo, u}} - {\xi_{\mino, u}} \to 0$ and 
${Y_{\innocent, \majo, u}} - {Y_{\innocent, \mino, u}} \to 0$. Then, 
\begin{align}
\Delta(u) & \to (1-\xi_{\majo, u}) (Y_{\criminal, \majo, u} - Y_{\criminal, \mino, u}). 
\end{align}
\end{proposition}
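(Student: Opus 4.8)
The plan is to invoke the characterization of $\Delta(u)$ supplied by Theorem \ref{thm: delta_representation} and then pass to the stated joint limit, exactly paralleling the argument for Proposition \ref{prop:scenario.3.1} but with the roles of the innocent and criminal terms interchanged. Concretely, I would begin from
\begin{align}
\Delta(u) = \Big(Y_{\innocent, \majo, u}\, \xi_{\majo, u}+Y_{\criminal, \majo, u}\, (1-\xi_{\majo, u})\Big)-\Big(Y_{\innocent, \mino, u}\,\xi_{\mino, u} +Y_{\criminal, \mino, u}\, (1-\xi_{\mino, u})\Big),
\end{align}
and regroup the four terms into an \emph{innocent block} $Y_{\innocent, \majo, u}\,\xi_{\majo, u} - Y_{\innocent, \mino, u}\,\xi_{\mino, u}$ and a \emph{criminal block} $Y_{\criminal, \majo, u}\,(1-\xi_{\majo, u}) - Y_{\criminal, \mino, u}\,(1-\xi_{\mino, u})$.

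Next I would dispose of the innocent block. Writing $Y_{\innocent, \mino, u} = Y_{\innocent, \majo, u} - \varepsilon_1$ and $\xi_{\mino, u} = \xi_{\majo, u} - \varepsilon_2$ with $\varepsilon_1, \varepsilon_2 \to 0$ (the two hypotheses), substitution and expansion show that every surviving summand of the innocent block carries a factor of $\varepsilon_1$ or $\varepsilon_2$. Since $\xi_{d,u}$ and $Y_{\innocent, d, u}$ are probabilities lying in $[0,1]$, the coefficients multiplying these factors are bounded, so the innocent block tends to $0$.

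I would then treat the criminal block with the same substitution $\xi_{\mino, u} = \xi_{\majo, u} - \varepsilon_2$. This yields $(1-\xi_{\majo, u})(Y_{\criminal, \majo, u} - Y_{\criminal, \mino, u}) + \varepsilon_2\, Y_{\criminal, \mino, u}$; the residual term $\varepsilon_2\, Y_{\criminal, \mino, u}$ vanishes as $\varepsilon_2 \to 0$, again by boundedness of $Y_{\criminal, \mino, u}$. Combining the two blocks leaves precisely the claimed limit $(1-\xi_{\majo, u})(Y_{\criminal, \majo, u} - Y_{\criminal, \mino, u})$.

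I anticipate no substantive obstacle here: the statement is a direct limit computation once Theorem \ref{thm: delta_representation} is in hand. The only point deserving a line of justification is that the cross terms genuinely vanish rather than merely being small; this follows because $\xi_{d,u}$, $Y_{\innocent, d, u}$ and $Y_{\criminal, d, u}$ are all probabilities and hence uniformly bounded in $[0,1]$, so every error term is a bounded quantity times $\varepsilon_1$ or $\varepsilon_2$ and the limits distribute across the relevant products and sums.
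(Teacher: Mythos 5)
Your proof is correct and takes essentially the same route as the paper's: both start from the characterization of Theorem \ref{thm: delta_representation}, split $\Delta(u)$ into the innocent block and the criminal block, and use an add-and-subtract manipulation (equivalently, your $\varepsilon$-substitution) together with boundedness of the probabilities to show the innocent block vanishes while the criminal block tends to $(1-\xi_{\majo, u})(Y_{\criminal, \majo, u} - Y_{\criminal, \mino, u})$. The only slip is that the residual in your criminal block should read $-\varepsilon_2\, Y_{\criminal, \mino, u}$ rather than $+\varepsilon_2\, Y_{\criminal, \mino, u}$, which is immaterial since that term vanishes in the limit.
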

Proof can be found in Appendix Section \ref{appendix:proof3.2}. \\

\noindent{\em Key takeaway:} in scenario 3 case 2, there is observational disparity against the minority in law enforcement if and only if the minority criminals are more  often subject to policing actions. \\
  
\noindent{\bf Case 3.} Consider the setting where policing action for innocents and criminals is unbiased. That is, innocents and criminals of different races are equally likely to be subjected to policing actions (${Y_{\innocent, \majo, u}}\simeq {Y_{\innocent, \mino, u}}$ and ${Y_{\criminal, \majo, u}}\simeq {Y_{\criminal, \mino, u}}$). 
Therefore, by characterization of \eqref{eq:delta.u.char}, we have $\Delta(u) \simeq (Y_{\criminal, \majo, u}-Y_{\innocent, \majo, u})(\xi_{\mino, u}-\xi_{\majo, u}).$
Formally, we state the following.
\begin{proposition}\label{prop:scenario.3.3}
For $d \in \{\majority, \minority\}$ $u \in \Reals^d$, let ${Y_{\innocent, \majo, u}} - {Y_{\innocent, \mino, u}} \to 0$, ${Y_{\criminal, \majo, u}} - {Y_{\criminal, \mino, u}} \to 0$. Then, 
\begin{align}
\Delta(u) & \to (Y_{\criminal, \majo, u}-Y_{\innocent, \majo, u})(\xi_{\mino, u}-\xi_{\majo, u}). 
\end{align}
\end{proposition}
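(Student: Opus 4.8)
The plan is to reduce Proposition \ref{prop:scenario.3.3} to a direct algebraic simplification of the characterization of $\Delta(u)$ established in Theorem \ref{thm: delta_representation}. That result gives
\begin{align*}
\Delta(u) = \Big(Y_{\innocent, \majo, u}\,\xi_{\majo, u} + Y_{\criminal, \majo, u}\,(1-\xi_{\majo, u})\Big) - \Big(Y_{\innocent, \mino, u}\,\xi_{\mino, u} + Y_{\criminal, \mino, u}\,(1-\xi_{\mino, u})\Big).
\end{align*}
Since the only hypotheses in this case constrain the policing-action terms, namely $Y_{\innocent, \majo, u} - Y_{\innocent, \mino, u} \to 0$ and $Y_{\criminal, \majo, u} - Y_{\criminal, \mino, u} \to 0$, while the reporting proportions $\xi_{\majo, u}, \xi_{\mino, u}$ remain arbitrary in $[0,1]$, the whole argument is a limit computation in these two differences.

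First I would introduce shorthand, writing $a = Y_{\innocent, \majo, u}$ and $b = Y_{\criminal, \majo, u}$, and substituting $Y_{\innocent, \mino, u} = a - \epsilon_1$, $Y_{\criminal, \mino, u} = b - \epsilon_2$ with $\epsilon_1, \epsilon_2 \to 0$ by hypothesis. Plugging these into the displayed expression and collecting terms, the $\epsilon_1$ and $\epsilon_2$ contributions are each multiplied by a factor lying in $[0,1]$ (namely $\xi_{\mino, u}$ and $1-\xi_{\mino, u}$), hence they vanish in the limit. The surviving terms are precisely those obtained by replacing the minority policing rates with the corresponding majority rates.

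Next I would simplify $\big(a\,\xi_{\majo, u} + b\,(1-\xi_{\majo, u})\big) - \big(a\,\xi_{\mino, u} + b\,(1-\xi_{\mino, u})\big)$. Expanding and cancelling the constant $b$ terms yields $a(\xi_{\majo, u} - \xi_{\mino, u}) - b(\xi_{\majo, u} - \xi_{\mino, u}) = (a-b)(\xi_{\majo, u} - \xi_{\mino, u})$, which I would then rewrite, by flipping the signs of both factors, as $(b-a)(\xi_{\mino, u} - \xi_{\majo, u}) = (Y_{\criminal, \majo, u} - Y_{\innocent, \majo, u})(\xi_{\mino, u} - \xi_{\majo, u})$, exactly the claimed limit.

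There is no substantive obstacle here: the statement is a routine consequence of Theorem \ref{thm: delta_representation}, mirroring the derivations of Propositions \ref{prop:scenario.3.1} and \ref{prop:scenario.3.2}. The only point requiring care is the sign bookkeeping in the final factoring step, since the target form pairs $(Y_{\criminal, \majo, u} - Y_{\innocent, \majo, u})$ with $(\xi_{\mino, u} - \xi_{\majo, u})$; one should verify explicitly that the double sign flip from the natural grouping $(a-b)(\xi_{\majo, u} - \xi_{\mino, u})$ leaves the product unchanged. It is also worth stating that the error terms vanish precisely because the $\xi$'s are probabilities and therefore bounded, which is what lets the two ``$\to 0$'' hypotheses control the approximation.
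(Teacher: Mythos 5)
Your proposal is correct and follows essentially the same route as the paper: substitute the limiting equalities $Y_{\innocent,\majo,u}=Y_{\innocent,\mino,u}$ and $Y_{\criminal,\majo,u}=Y_{\criminal,\mino,u}$ into the characterization of Theorem~\ref{thm: delta_representation} and factor, with your $\epsilon$-bookkeeping and the observation that the $\xi$'s are bounded being a cleaner (and welcome) way to justify why the error terms vanish. In fact your final factoring lands exactly on the stated limit $(Y_{\criminal,\majo,u}-Y_{\innocent,\majo,u})(\xi_{\mino,u}-\xi_{\majo,u})$, whereas the paper's own appendix derivation contains sign slips in its intermediate regrouping and a typo in its last displayed line, so your version is the more reliable of the two.
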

Proof can be found in Appendix Section \ref{appendix:proof3.3}. \\

\noindent{\em Key takeaway:} in scenario 3 case 3, there is observational disparity against the minority in law enforcement if and only if the 911-reported majorities are more likely to be innocents. That is, observational bias against the minority is due to actual reporting bias against the majority race!

%For more precise statements of the three cases, please refer to Appendix Section \ref{appendix:more_or_less}.
\section{Empirics}\label{sec:empirics}

In this section, we utilize publicly available datasets on calls for service (i.e. 911 calls) and police stops from NYC and New Orleans to evaluate the implications of our framework. At its core, our goal is to understand whether there is evidence of bias either in law enforcement or reporting or both given the observed data. To that end, Section \ref{sec:systemic} provides a data-driven way to evaluate racial disparity. In particular, Proposition  \ref{prop:parity_parity} as well Propositions \ref{prop:scenario.3.1}-\ref{prop:scenario.3.3} provide means to interpret the observational racial disparity. \\

\noindent {\em Test Statistics and Interpretation.} Given context $u \in \Reals^d$, the test statistics is $\Delta(u)$, the observational disparity in law enforcement. Using available data, for each $u \in \Reals^d$, we test the hypotheses 
\begin{align}\label{eq:null.hyp}
    H_0(u) & : \Delta(u) = 0 \text{ vs. } H_1(u) : \Delta(u) \neq 0
\end{align} 
Specifically, we compute the 95\% confidence interval for the value of $\Delta(u)$, and if the interval does not contain $0$, then we reject $H_0(u)$. In that case, per Proposition  \ref{prop:parity_parity}, we conclude presence of racial disparity (of some form). 

If we conclude presence of racial disparity, to pin-point the cause further, we utilize Propositions \ref{prop:scenario.3.1}-\ref{prop:scenario.3.3}. 
\begin{itemize}
    
    \item {\em Case 1. } If public reporting and policing actions with respect to criminals are unbiased, then
    \begin{itemize}
        \item $\Delta(u) < 0$ is primarily due to the bias against the minority innocents in policing actions.  
        \item $\Delta(u) > 0$ is primarily due to the bias against the majority innocents in policing actions. 
    \end{itemize}

    \item {\em Case 2. } If public reporting and policing action with respect to innocents are unbiased, then
    \begin{itemize}
        \item $\Delta(u) < 0$ is primarily due to the bias against minority criminals in policing actions.  
        \item $\Delta(u) > 0$ is primarily due to the bias against majority criminals in policing actions. 
    \end{itemize}
    \item {\em Case 3. } If policing action with respect to innocents and criminals are unbiased, then
    \begin{itemize}
        \item $\Delta(u) < 0$ is primarily due to the bias against the majority in public reporting.  
        \item $\Delta(u) > 0$ is primarily due to the bias against the minority in public reporting. 
    \end{itemize}
\end{itemize}
Specifically, given the context, we shall utilize one of the three cases above to conclude appropriate likely primary cause for racial disparity. \\

\noindent {\em Details of Empirical Study.} Various details related to our empirical study are provided in a thorough manner in Appendix Section \ref{appendix:seattle} due to space constraints. However, we highlight a few key aspects here. First, to define the {\em context} (i.e. $U$), we utilize the precinct as a proxy. This is because a precinct often (1) consists of a comparatively homogeneous neighborhood in terms of the socioeconomic status and (2) is served by the same batch of police officers for relatively consistent law enforcement. The details about data and associated necessary data processing related to the 911 call, police stop and associated use of force are explained in Appendix Section \ref{appendix:seattle}. Next we describe our findings for NYC and New Orleans, and our likely conclusions.

%Now we present the empirical results of NYC and New Orleans. We also include the results of Seattle in Appendix Section \ref{appendix:seattle} due to limited space  here.

\begin{figure}[htbp]
  \centering
  \includegraphics[width=1.01\linewidth]{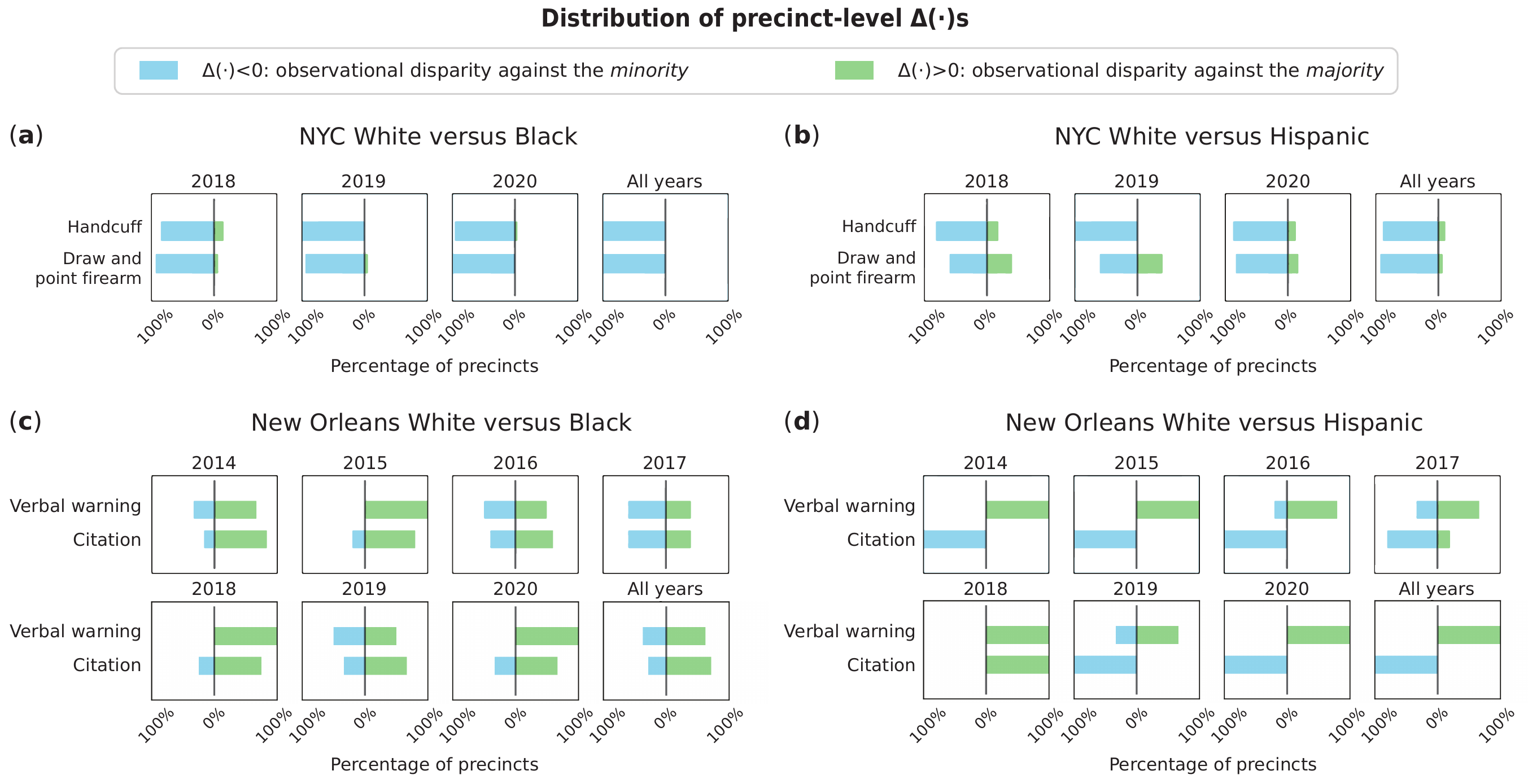}
  \caption{Distribution of statistically significant precinct-level $\Delta(\cdot)$s for different policing actions. $\Delta(\cdot)<0$ indicates bias against minority (Black/Hispanic); $\Delta(\cdot)>0$ indicates bias against majority (White). (a) NYC White versus Black; (b) NYC White versus Hispanic; (c) New Orleans White versus Black; (d) New Orleans White versus Hispanic. In summary, there is predominantly observation bias against minority (Black / Hispanic) compared to majority (White) in NYC while in New Orleans there is predominantly observation bias against  majority (White) compared to minority (Black / Hispanic). We conclude that the primary source of bias in NYC is policing action against minority, while in New Orleans it is in the 911 calling (reporting) against minority!} 
  \label{fig:both}
\end{figure}

\subsection{New York City (NYC)}

{\bf Data.} NYC police commands exercise their authority over 77 police precincts. We utilize the 911 call data and pedestrian stop data over the years 2018, 2019, and 2020. We consider {\em White} as the majority race while {\em Black} or {\em Hispanic} as the minority race in the data analysis. 
%{\color{black} We specifically choose pedestrian stop data as it could potentially exhibit different law enforcement behaviors than traffic stop data, which has been studied extensively in \cite{Pierson_Simoiu_Overgoor_Corbett-Davies_Jenson_Shoemaker_Ramachandran_Barghouty_Phillips_Shroff}. We focus on data between 2018 and 2020 due to the availability of 911 call data and its advantages in representing up-to-date policing behaviors, which could be more useful for policy-making than data prior to NYC v. Floyd.}
After appropriate data processing, it contains $19,509,773$ total 911 calls-for-service, $34,009$ stops among which $21,130$ are induced from 911 calls. In this sense, $0.11\%$ of 911 calls result in stops and $62.13\%$ are induced by 911 calls. This suggests that only a small portion of 911 calls end up in stops but the majority of stops are from 911 calls. It stresses the importance of 911 calls in police-civilian encounters. The detailed year-to-year data statistics are summarized in Table~\ref{table:stats_nyc} in Appendix Section \ref{appendix:statistics}. 

\noindent In keeping with \cite{Fryer2019}, use of force in a police-civilian interaction is classified at different levels with an additional level $0$ corresponding to {\em no use of force}.  A police-civilian interaction may involve several different levels of use of force and we associate the incident with the highest level of force used.  \\

\noindent {\bf Key Takeaway.} We obtain the results of $\Delta(\cdot)$s  for NYC across all precincts as summarized in Figure \ref{fig:both} (a) and (b). The blue bars and green bars represent the proportions of precincts with statistically significant negative $\Delta(\cdot)$s and positive $\Delta(\cdot)$s, i.e. reject the null hypothesis \eqref{eq:null.hyp}. 

We see that across different levels of use of force, it is comparatively consistent that most statistically significant $\Delta(\cdot)$'s are negative, exhibiting observational bias in law enforcement against the minority. In the case of NYC, it seems that Case 1 or Case 2 discussed above is likely. That is, the reason for observed racial disparity against minority is primarily due to {\em bias in policing action against innocents or criminals of the minority race}.

\subsection{New Orleans}

\noindent{\bf Data.} New Orleans is divided into eight police districts.  After data pre-processing, from 2014 to 2020, there are $2,160,679$ total 911 calls-for-service and  $349,942$ stops. Of these stops, $134,746$ are induced from 911 calls. That is, $6.24\%$ of 911 calls result in stops and $38.51\%$ of stops are induced from by 911 calls. The detailed year-to-year statistics of data is summarized in Table~\ref{table:stats_no} in Appendix Section \ref{appendix:statistics}.

\noindent{\bf Key Takeaway.} The summary findings for New Orleans are shown in Figure \ref{fig:both} (c) and (d). The blue bars and green bars represent the proportions of precincts with statistically significant negative $\Delta(\cdot)$s and positive $\Delta(\cdot)$s. As can be seen, unlike NYC, for New Orleans, we consistently find most statistically significant $\Delta(\cdot)$'s are positive, exhibiting observational bias in law enforcement {\em against the majority. }

Now if we consider Cases 1 or 2, then it would conclude that the police action is biased against majority which is not a likely scenario. Given that, the Case 3 is more likely. In which case, we would conclude that primary source for {\em observational bias against majority is the bias against minorities (through over-reporting) in 911 calls-for-service.}

%(i) if public reporting and policing action with respect to criminals 
%are unbiased, then the primary source of bias is that against majority innocents in policing actions;  (ii) if public reporting and policing %action with respect to innocents are unbiased, then the primary source  of bias is that against majority criminals in policing actions; and 
%(iii) if policing actions with respect to innocents and criminals are unbiased, then the primary source of bias is that against minority in public reporting.

\section{Conclusion and Future Work}\label{sec:discussion}

In this work, we have developed a causal framework to evaluate racial bias or disparity in a multi-stage law enforcement system. We furbished it with data-driven test to detect the presence of bias. We identified three canonical scenarios where it is feasible to identify the primary source of bias when it is present. We applied our framework to the setting of police-civilian interaction using police stop data and 911 call data from NYC and New Orleans. We found that there is observational racial disparity against the minority in NYC while against the majority in New Orleans. Our framework leads to the conclusion that the likely source of bias in NYC is the biased policing actions against the minority while the likely source of bias in New Orleans is the biased 911 call reporting against the minority.  

We highlight that discerning the potential source of racial bias in law enforcement systems is not to assign blame but to inform effective policy reforms. It is important to note that our conclusions are derived from empirical analyses using data collected over a certain period of time. Therefore, we advocate that any proposed policy changes should be grounded with the latest data to ensure that reforms are relevant, timely, and based on the most current understanding of the issues at hand. 

Our work advances state-of-art in multiple ways. First, it provides a comprehensive framework by including notion of criminality leading to a proper notion racial (dis)parity, and thus overcoming the seeming debate in the prior works of \cite{KLM, Gaebler_Cai_Basse_Shroff_Goel_Hill_2020} over the right definition of causal estimand to determine the racial (dis)parity. Second, it provides a multi-stage framework to capture the entire causal chain of interactions. Third, our work provides data-driven test to detect the racial disparity. Fourth, through theoretical analysis, we identify canonical scenarios where it is feasible to identify the primary source of bias in the presence of it. Finally, the empirical analysis using police stop data along with 911 calls data provides new insights into the study of racial bias that has received extensive attention in the recent literature.

Our framework has applicability beyond the police-civilian interaction. For example, its applicability in the context of AI-empowered policing with sharp theoretical understanding would be extremely useful for evaluating the biases within AI-empowered systems as they get deployed more and more in the upcoming future. For instance, a similar causal framework can be utilized to evaluate bias within AI-empowered recruitment process. In particular, a recruitment process is a multi-stage system where a candidate of a particular race must first have access to a job opening information to apply, be interviewed, and then either be hired or not. Minority candidates with qualifications may already be at a disadvantage compared to majority candidates even before the actual interviews if they have less access to job opening information. And if an AI-empowered system is utilized to decide on whether to interview a candidate or not based on their resumes and job application, then it starts looking very similar to the scenario of AI-empowered policing we discussed earlier.

\bibliographystyle{ACM-Reference-Format}
\bibliography{sample-base}

{\color{black}

}

\bigskip

%\begin{appendices}
%%
%% The next two lines define the bibliography style to be used, and
%% the bibliography file.

\appendix
%\include{content/6-appendix}
%\setcounter{page}{1}
%\section{Lifecycle of the \systemic or multi-stage law enforcement}

\newpage
\section{Proof of Lemma \ref{lemma:calculation}}\label{appendix:prop_do_calculus}

\begin{proof}
Using the law of total probability, we have
\begin{align}
& \p^{\doop{D=d}}[ Y=1 \,\mid\, C=c, U=u]  \\
=& \sum_{r, m \in \{0,1\}} \p^{\doop{D=d}}[ Y=1, M=m, R=r,  D = d  \,\mid\, C=c, U=u] \nonumber \\
= &\sum_{r, m \in \{0,1\}} \p^{\doop{D=d}}[ Y=1, M=m \,\mid\, R=r, D=d, C=c, U=u] \nonumber \\ 
& \qquad\qquad\qquad \times \p^{\doop{D=d}}[ R=r \,\mid\, D=d, C=c, U=u] \times \p^{\doop{D=d}}[ D=d \,\mid\, C=c, U=u] \label{eq:do.1} \\
=&\p[ Y=1, M=1 \,\mid\, R=1, D=d, C=c, U=u] \times \p[ R=1 \,\mid\, D=d, C=c, U=u]. \label{eq:do.2.2}
\end{align}
where \eqref{eq:do.2.2} is obtained from \eqref{eq:do.1} since $Y = 1$ only if $M = 1$ and $R=1$, and by do-calculus and the causal DAG, $\p^{\doop{D=d}}[ D=d \,\mid\, C=c, U=u]=1$ and $\p^{\doop{D=d}}(\cdot \,\mid\, \cdot) = \p(\cdot\,\mid\, \cdot)$, i.e., the statistical conditionals. 
\end{proof}

{\color{black} \section{Compatibility among some fairness notions}\label{appendix:fairness}
\begin{lemma}
    Let \emph{predictive parity} be $\p(C = 1| R = 1, D = \majority, U=u) = \p(C = 1| R = 1, D = \minority, U=u)$ and \emph{group fairness} be $\p(R = 1|D = \majority, U=u) = \p(R = 1|D = \minority, U=u)$. Given the causal DAG in Fig. \ref{fig:systematic_DAG}, parity in reporting leads to \emph{predictive parity} and \emph{group fairness}.
\end{lemma}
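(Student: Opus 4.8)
The plan is to derive both conclusions from the single structural assumption encoded in the DAG of Figure~\ref{fig:systematic_DAG} --- namely that there is no edge between $C$ and $D$, so that $C \perp D \mid U$, i.e. $\p(C=c \mid D=d, U=u) = \p(C=c \mid U=u)$ for all $c, d, u$ --- combined with the hypothesis of parity in reporting (Eq.~\eqref{eq:parity.public}), which states that $\p(R=1 \mid D=d, C=c, U=u)$ does not depend on $d$. The strategy is to show that each target quantity is invariant under changing $d \in \{\majority, \minority\}$, establishing group fairness first and then bootstrapping it to obtain predictive parity.

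First I would establish group fairness. By the law of total probability, conditioning on criminality,
\begin{align}
\p(R=1 \mid D=d, U=u) = \sum_{c \in \{0,1\}} \p(R=1 \mid C=c, D=d, U=u)\, \p(C=c \mid D=d, U=u).
\end{align}
In this sum the first factor is independent of $d$ by parity in reporting, and the second factor equals $\p(C=c \mid U=u)$ by the conditional independence $C \perp D \mid U$, hence is also independent of $d$. Therefore the entire right-hand side is independent of $d$, which is exactly group fairness.

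Next I would use this to establish predictive parity. Applying Bayes' rule,
\begin{align}
\p(C=1 \mid R=1, D=d, U=u) = \frac{\p(R=1 \mid C=1, D=d, U=u)\, \p(C=1 \mid D=d, U=u)}{\p(R=1 \mid D=d, U=u)}.
\end{align}
The first factor in the numerator is independent of $d$ by parity in reporting; the second factor is independent of $d$ by $C \perp D \mid U$; and the denominator is independent of $d$ by the group fairness just established. Hence the whole expression is independent of $d$, giving predictive parity.

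The calculation itself is routine, so I do not expect a genuine obstacle; the one conceptual point to keep in mind is that the argument hinges entirely on the absence of a $C$--$D$ edge in the DAG (without which $\p(C=c \mid D=d, U=u)$ would carry $d$-dependence and break both steps), and that group fairness must be derived before predictive parity since it supplies the $d$-independence of the normalizing denominator that the Bayes'-rule step relies upon.
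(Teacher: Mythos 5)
Your proof is correct and follows essentially the same route as the paper's: both arguments combine parity in reporting with the $C \perp D \mid U$ independence encoded by the missing $C$--$D$ edge, via Bayes' rule and the law of total probability. The only cosmetic difference is ordering --- the paper first obtains predictive parity through an odds-ratio computation (which cancels the marginal $\p(R=1 \mid D=d, U=u)$ and so never needs group fairness as an intermediate step), and then notes group fairness follows; you derive group fairness first and feed it into the denominator of the Bayes'-rule step.
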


\begin{proof}
    First, we want to show given 
    \begin{align}
        \p[ R=1 \,\mid\, D=\majority, C=c, U=u] &= \p[ R=1\,\mid\,  D=\minority, C=c, U=u] \text{\ and}\label{eq:rr}\\
        \p[ C=c \,\mid\, D=\majority, U=u] &= \p[ C=c\,\mid\,  D=\minority, U=u], \label{eq:cc}
    \end{align}
we have 
\begin{align}
    \p[ C=c \,\mid\, D=\majority, R=1, U=u] = \p[ C=c\,\mid\,  D=\minority, R=1, U=u]\label{eq:cd}.
\end{align}

    First, notice that 
\begin{align*}
    &\frac{\p[C=1\,\mid\, D=d, R=1, U=u]}{1-\p[C=1\,\mid\,D=d, R=1, U=u]} \\
    =&  \frac{\p[C=1\,\mid\, D=d, U=u]}{\p[C=0\,\mid\, D=d, U=u]}\cdot \frac{\p[ R=1 \,\mid\, D=d,C=1, U=u]/\p[R=1\,\mid\, D=d, U=u]}{\p[ R=1 \,\mid\, D=d, C=0, U=u]/\p[R=1\,\mid\, D=d, U=u]}\\
    =& \frac{\p[C=1\,\mid\, U=u]}{\p[C=0\,\mid\, U=u]}\cdot \frac{\p[ R=1 \,\mid\, C=1, U=u]}{\p[ R=1 \,\mid\, C=0, U=u]}.
\end{align*}
Thus 

\begin{align*}
    &\frac{\p[C=1\,\mid\, D=\majority, R=1, U=u]}{1-\p[C=1\,\mid\,D=\majority, R=1, U=u]} = \frac{\p[C=1\,\mid\, D=\minority, R=1, U=u]}{1-\p[C=1\,\mid\,D=\minority, R=1, U=u]}\\
    \Leftrightarrow& \p[C=c\,\mid\, D=\majority, R=1, U=u] = \p[C=c\,\mid\, D=\minority, R=1, U=u].
\end{align*}

Then is straightforward given \eqref{eq:rr}, \eqref{eq:cc} and \eqref{eq:cd}, we have
\begin{align}
    \p[R=1\,\mid\,D=\majority, U=u] &= \p[R=1\,\mid\,D=\minority, U=u]\label{eq:rrr1}.
\end{align}

%This  as \eqref{eq:rr} $\cdot$ \eqref{eq:cc} =  \eqref{eq:cd}$\cdot$ \eqref{eq:rrr1}.

In fact, any two equalities of \eqref{eq:rr}, \eqref{eq:cc}, \eqref{eq:cd}, and \eqref{eq:rrr1}, except \eqref{eq:cd} and \eqref{eq:rrr1}, can lead to the other two equalities among the four. 
\end{proof}
}

\section{Proof of Proposition \ref{prop:parity_parity}}\label{appendix:prop_parity_parity}

\begin{proof}
Define
\begin{align*}
	R_{\criminal, \majo, u} &= \p[R=1 \,\mid\, C=1, D=\majority, U=u], \\
   	R_{\innocent, \majo, u} &= \p[R=1 \,\mid\, C=0, D=\majority, U=u], \\
    C_{\criminal, \majo, u} &= \p[ C=1 \,\mid\, D=\majority, U=u], \\
    C_{\innocent, \majo, u} &= \p[ C=0 \,\mid\, D=\majority, U=u].
\end{align*}
Given the notations in Section \ref{sec:racial_disparity}, the proposition is thus simplified to $\forall u \in \mathbb{R}^d$, if $Y_{\criminal, \majo, u} = Y_{\criminal, \mino, u}, Y_{\innocent, \majo, u} = Y_{\innocent, \mino, u}, R_{\criminal, \majo, u} = R_{\criminal, \mino, u}, \text{ and }R_{\innocent, \majo, u} = R_{\innocent, \mino, u}$, then we should have $\p[ Y=1, M=1 \,\mid\, R=1, D=\majority, U=u]=\p[ Y=1, M=1 \,\mid\, R=1, D=\minority, U=u]$. 

Noticing that the if statement considers whether an individual is a criminal or not while the then statement does not, we thus hope to use the quantities involving whether a person has committed the crime to represent the quantities not involving so. Thus, we have $\forall d \in \{\majority, \minority\}$ and $\forall u \in \mathbb{R}^d$,
{\footnotesize \begin{align}
&\p[ Y=1, M=1 \,\mid\, R=1, D=d, U=u] \nonumber\\
=&\frac{\p[ Y=1, M=1, R=1 \,\mid\,  D=d, U=u]}{\p[ R=1 \,\mid\,  D=d, U=u]}\tag{By definition of conditional probability}\\
=& \frac{\sum\limits_{c \in \{0,1\}} \p[ Y=1, M=1, R=1, C=c \,\mid\,  D=d, U=u]}{\sum\limits_{c \in \{0,1\}} \p[ R = 1, C=c \,\mid\, D=d, U=u]} \tag{By law of total probability}\\
 =& \frac{\sum\limits_{c \in \{0,1\}} \p[ Y=1, M=1 \,\mid\,  R=1, C=c, D=d, U=u] \p[R=1 \,\mid\, C=c, D=d, U=u] \p[ C=c \,\mid\,  D=d, U=u]}{\sum\limits_{c \in \{0,1\}} \p[R=1 \,\mid\, C=c, D=d, U=u] \p[ C=c \,\mid\,  D=d, U=u]} \tag{By chain rule of conditional probability on both the numerator and denominator}
\\
=&\begin{cases}\label{eq:comparison}
   \dfrac{Y_{\innocent, \majo, u}R_{\innocent, \majo, u}C_{\innocent, \majo, u}+Y_{\criminal, \majo, u}R_{\criminal, \majo, u}C_{\criminal, \majo, u}}{R_{\innocent, \majo, u}C_{\innocent, \majo, u}+R_{\criminal, \majo, u}C_{\criminal, \majo, u}} \text{ if } d=\majority \\
   \dfrac{Y_{\innocent, \mino, u}R_{\innocent, \mino, u}C_{\innocent, \mino, u}+Y_{\criminal, \mino, u}R_{\criminal, \mino, u}C_{\criminal, \mino, u}}{R_{\innocent, \mino, u}C_{\innocent, \mino, u}+R_{\criminal, \mino, u}C_{\criminal, \mino, u}}\text{ if } d=\minority.\\  
   \end{cases}
\end{align}}%

We now want to show the quantity for majority and the one for minority in (\ref{eq:comparison}) equal each other. From above, we know we are given $Y_{\criminal, \majo, u} = Y_{\criminal, \mino, u}$ and $Y_{\innocent, \majo, u} = Y_{\innocent, \mino, u}$ due to parity in policing and $R_{\criminal, \majo, u} = R_{\criminal, \mino, u}$ and $R_{\innocent, \majo, u} = R_{\innocent, \mino, u}$ due to parity in public reporting. Moreover, given the systematic causal DAG, we know $C$ is independent of $D$ given $U$, i.e. $C_{\innocent, \majo, u} = C_{\innocent, \mino, u}$ and $C_{\criminal, \majo, u} = C_{\criminal, \mino, u}$. With the six equalities listed above, we have the quantity for the majority the same as for the minority in (\ref{eq:comparison}). Therefore, given parity in policing and public reporting, we have the observational parity $\p[ Y=1, M=1 \,\mid\, R=1, D=\mino, U=u]=\p[ Y=1, M=1 \,\mid\, R=1, D=\majo, U=u]$ $\forall u \in \mathbb{R}^d$ which can be obtained from real-world data.
\end{proof}

\section{Proof of Theorem \ref{lemma:parity_parity}}\label{appendix:lemma_parity_parity}
\begin{proof}
Notice that this theorem is exactly the contraposition of Proposition \ref{prop:parity_parity}, so it is proved automatically. With this theorem, if under some context $U=u$ we notice $\p[ Y=1, M=1 \,\mid\, R=1, D=\majo, U=u]\neq\p[ Y=1, M=1 \,\mid\, R=1, D=\mino, U=u]$, then we know immediately at least one of the parities (parity in public reporting and parity in policing) must be violated. 
\end{proof}

\section{Proof of Theorem \ref{thm: delta_representation}}\label{appendix:delta_representation_proof}
\begin{proof}
Using the same technique in Appendix Section \ref{appendix:prop_parity_parity}, we first represent the two observational quantities with the quantities involving whether a person has committed the crime or not. $\forall c \in \{0, 1\}, d\in \{\majority, \minority\}$ and $u \in \Reals^d$,
\begin{align}
& \p[ Y=1, M=1 \,\mid\, R=1, D=d, U=u]\nonumber\\
=& \sum_{c\in\{0, 1\}}\p[ Y=1, M=1, C=c\,\mid\, R=1, D=d, U=u]\tag{By law of total probability}\\
=& \sum_{c\in\{0, 1\}}\p[ Y=1, M=1\,\mid\, C=c, R=1, D=d, U=u] \cdot \p[ C=c \,\mid\, R=1, D=d, U=u]\tag{By chain rule}\\
=& Y_{\innocent, d, u} \cdot \xi_{d, u}+Y_{\criminal, d, u} \cdot (1-\xi_{d, u})\label{eq:single_representation}
\end{align}
Thus, plugging Eq.\eqref{eq:single_representation} into $\Delta(u)$, we have
\begin{align}
\Delta(u)=&\p[ Y=1, M=1 \,\mid\, R=1, D=\text{majority}, U=u] \\
&- \p[ Y=1, M=1 \,\mid\, R=1, D=\text{minority}, U=u]\nonumber\\
=&\Big (Y_{\innocent, \majo, u} \cdot \xi_{\majo, u}+Y_{\criminal, \majo, u} \cdot (1-\xi_{\majo, u})\Big )\\
&-\Big (Y_{\innocent, \mino, u} \cdot \xi_{\mino, u}+Y_{\criminal, \mino, u} \cdot (1-\xi_{\mino, u})\Big )\nonumber
\end{align}
Note that $Y_{\innocent, d, u} \cdot \xi_{d, u} = \p[ Y=1, M=1, C=0\,\mid\, R=1, D=d, U=u]$ and $Y_{\criminal, d, u} \cdot (1-\xi_{d, u}) = \p[ Y=1, M=1, C=0\,\mid\, R=1, D=d, U=u]$, we interpret the first term in each pair of large parentheses as the likelihood of a reported individual being an innocent subject to policing actions and the second term in each pair of large parentheses as the possibility of a reported individual being a criminal subject to policing actions.
\end{proof}

\section{Proof of Proposition \ref{prop:scenario.1}}\label{appendix:mostly_innocent}
\begin{proof}
If $\xi_{d, u}\to 1$ for $d \in \{\majority, \minority\}$, then we have 
\begin{align*}
	\Delta(u) &=\Big (Y_{\innocent, \majo, u}\cdot \xi_{\majo, u}+Y_{\criminal, \majo, u}\cdot (1-\xi_{\majo, u})\Big )\\
 &-\Big ({Y_{\innocent, \mino, u}\cdot\xi_{\mino, u}} +Y_{\criminal, \mino, u}\cdot (1-\xi_{\mino, u})\Big )\\
	&\to Y_{\innocent, \majo, u}-Y_{\innocent, \mino, u}
\end{align*}
Thus, $\Delta(u)<0 \Leftrightarrow Y_{\innocent, \mino, u} > Y_{\innocent, \majo u}$.
\end{proof}

\section{Proof of Proposition \ref{prop:scenario.2}}\label{appendix:mostly_guilty}
\begin{proof}
If $\xi_{d, u}\to 0$ for $d \in \{\majority, \minority\}$, then we have 
\begin{align*}
	\Delta(u) &=\Big (Y_{\innocent, \majo, u}\cdot \xi_{\majo, u}+Y_{\criminal, \majo, u}\cdot (1-\xi_{\majo, u})\Big )\\
 &-\Big ({Y_{\innocent, \mino, u}\cdot\xi_{\mino, u}} +Y_{\criminal, \mino, u}\cdot (1-\xi_{\mino, u})\Big )\\
	&\to Y_{\criminal, \majo, u}-Y_{\criminal, \mino, u}
\end{align*}
Thus, $\Delta(u)<0 \Leftrightarrow Y_{\criminal, \mino, u} > Y_{\criminal, \majo u}$.
\end{proof}

\section{Proof of Proposition \ref{prop:scenario.3.1}}\label{appendix:proof3.1}
\begin{proof}

If $\frac{\xi_{\majo, u}}{\xi_{\mino, u}}\to 1$ and $\frac{Y_{\criminal, \majo, u}}{Y_{\criminal, \mino, u}}\to 1$, then we have 
\begin{align*}
	\Delta(u) &= \Big (Y_{\innocent, \majo, u}\cdot \xi_{\majo, u}+Y_{\criminal, \majo, u}\cdot (1-\xi_{\majo, u})\Big )\\
 &-\Big ({Y_{\innocent, \mino, u}\cdot\xi_{\mino, u}} +Y_{\criminal, \mino, u}\cdot (1-\xi_{\mino, u})\Big )\\
	&= \Big (Y_{\innocent, \majo, u}\cdot \xi_{\majo, u} - {Y_{\innocent, \mino, u}\cdot\xi_{\mino, u}}\Big )\\
 &-\Big (Y_{\criminal, \majo, u}\cdot (1-\xi_{\majo, u}) -Y_{\criminal, \mino, u}\cdot (1-\xi_{\mino, u})\Big )\\
	&= \Big ((Y_{\innocent, \majo, u}\cdot \xi_{\majo, u}-Y_{\innocent, \mino, u}\cdot \xi_{\majo, u} )\\
 &\quad\quad +(Y_{\innocent, \mino, u}\cdot \xi_{\majo, u} - {Y_{\innocent, \mino, u}\cdot\xi_{\mino, u}})\Big )\\
	&\quad\quad-\Big ((Y_{\criminal, \majo, u}-Y_{\criminal, \mino, u}) - (Y_{\criminal, \majo, u}\cdot\xi_{\majo, u}-Y_{\criminal, \mino, u}\cdot \xi_{\mino, u})\Big )\\
	&=(\xi_{\majo, u}\cdot (Y_{\innocent, \majo, u}-Y_{\innocent, \mino, u})+(Y_{\innocent, \mino, u}\cdot (\xi_{\majo, u} - \xi_{\mino, u})\Big )\\
	&\quad\quad-\Big ((Y_{\criminal, \majo, u}-Y_{\criminal, \mino, u}) - (Y_{\criminal, \majo, u}\cdot\xi_{\majo, u}-Y_{\criminal, \mino, u}\cdot \xi_{\mino, u})\Big )\\
	&\to \xi_{\majo, u}\cdot (Y_{\innocent, \majo, u}-Y_{\innocent, \mino, u})
\end{align*}
Thus, $\Delta(u)<0 \Leftrightarrow Y_{\innocent, \mino, u} > Y_{\innocent, \majo u}.$
\end{proof}

\section{Proof of Proposition \ref{prop:scenario.3.2}}\label{appendix:proof3.2}
\begin{proof}
If $\frac{\xi_{\majo, u}}{\xi_{\mino, u}}\to 1$ and $\frac{Y_{\innocent, \majo, u}}{Y_{\innocent, \mino, u}}\to 1$, then we have 
{\footnotesize
\begin{align*}
	\Delta(u) &= \Big (Y_{\innocent, \majo, u}\cdot \xi_{\majo, u}+Y_{\criminal, \majo, u}\cdot (1-\xi_{\majo, u})\Big )-\Big ({Y_{\innocent, \mino, u}\cdot\xi_{\mino, u}} +Y_{\criminal, \mino, u}\cdot (1-\xi_{\mino, u})\Big )\\
	&= \Big (Y_{\innocent, \majo, u}\cdot \xi_{\majo, u} - {Y_{\innocent, \mino, u}\cdot\xi_{\mino, u}}\Big )-\Big (Y_{\criminal, \majo, u}\cdot (1-\xi_{\majo, u}) -Y_{\criminal, \mino, u}\cdot (1-\xi_{\mino, u})\Big )\\
	&= \Big (Y_{\innocent, \majo, u}\cdot \xi_{\majo, u} - {Y_{\innocent, \mino, u}\cdot\xi_{\mino, u}}\Big )\\
	&\quad\quad-\Big ((Y_{\criminal, \majo, u}-Y_{\criminal, \mino, u}) - (Y_{\criminal, \majo, u}\cdot\xi_{\majo, u}-Y_{\criminal, \mino, u}\cdot \xi_{\mino, u})\Big )\\
	&=\Big (Y_{\innocent, \majo, u}\cdot \xi_{\majo, u} - {Y_{\innocent, \mino, u}\cdot\xi_{\mino, u}}\Big )\\
	&\quad\quad-\Big ((Y_{\criminal, \majo, u}-Y_{\criminal, \mino, u})\\
 &\quad\quad- (Y_{\criminal, \majo, u}\cdot\xi_{\majo, u}-Y_{\criminal, \majo, u}\cdot\xi_{\mino, u}+Y_{\criminal, \majo, u}\cdot\xi_{\mino, u}-Y_{\criminal, \mino, u}\cdot \xi_{\mino, u})\Big )\\
	&=\Big (Y_{\innocent, \majo, u}\cdot \xi_{\majo, u} - {Y_{\innocent, \mino, u}\cdot\xi_{\mino, u}}\Big )\\
	&\quad\quad-\Big ((Y_{\criminal, \majo, u}-Y_{\criminal, \mino, u}) - (Y_{\criminal, \majo, u}\cdot(\xi_{\majo, u}-\xi_{\mino, u})+\xi_{\mino, u}\cdot(Y_{\criminal, \majo, u}-Y_{\criminal, \mino, u})\Big )\\
	&\to (1-\xi_{\majo, u})\cdot (Y_{\criminal, \majo, u}-Y_{\criminal, \mino, u})
\end{align*}}%
Thus, $\Delta(u)<0 \Leftrightarrow Y_{\criminal, \mino, u} > Y_{\criminal, \majo u}.$
\end{proof}

\section{Proof of Proposition \ref{prop:scenario.3.3}}\label{appendix:proof3.3}
\begin{proof}
If $\frac{Y_{\criminal, \majo, u}}{Y_{\criminal, \mino, u}}\to 1$ and $\frac{Y_{\innocent, \majo, u}}{Y_{\innocent, \mino, u}}\to 1$, then we have
{\footnotesize
\begin{align*}
	\Delta(u) &= \Big (Y_{\innocent, \majo, u}\cdot \xi_{\majo, u}+Y_{\criminal, \majo, u}\cdot (1-\xi_{\majo, u})\Big )-\Big ({Y_{\innocent, \mino, u}\cdot\xi_{\mino, u}} +Y_{\criminal, \mino, u}\cdot (1-\xi_{\mino, u})\Big )\\
	&= \Big (Y_{\innocent, \majo, u}\cdot \xi_{\majo, u} - {Y_{\innocent, \mino, u}\cdot\xi_{\mino, u}}\Big )-\Big (Y_{\criminal, \majo, u}\cdot (1-\xi_{\majo, u}) -Y_{\criminal, \mino, u}\cdot (1-\xi_{\mino, u})\Big )\\
	&= \Big (Y_{\innocent, \majo, u}\cdot \xi_{\majo, u} - Y_{\innocent, \mino, u}\cdot \xi_{\majo, u} + Y_{\innocent, \mino, u}\cdot \xi_{\majo, u} -{Y_{\innocent, \mino, u}\cdot\xi_{\mino, u}}\Big )\\
	&\quad\quad-\Big ((Y_{\criminal, \majo, u}-Y_{\criminal, \mino, u}) - (Y_{\criminal, \majo, u}\cdot\xi_{\majo, u}-Y_{\criminal, \mino, u}\cdot \xi_{\mino, u})\Big )\\
	&=\Big ((Y_{\innocent, \majo, u}-Y_{\innocent, \mino, u})\cdot \xi_{\majo, u} + {Y_{\innocent, \mino, u}\cdot(\xi_{\majo, u}-\xi_{\mino, u})}\Big )\\
	&\quad\quad-\Big ((Y_{\criminal, \majo, u}-Y_{\criminal, \mino, u}) \\
 &\quad\quad- (Y_{\criminal, \majo, u}\cdot\xi_{\majo, u}-Y_{\criminal, \mino, u}\cdot\xi_{\majo, u}+Y_{\criminal, \mino, u}\cdot\xi_{\majo, u}-Y_{\criminal, \mino, u}\cdot \xi_{\mino, u})\Big )\\
	&=\Big ((Y_{\innocent, \majo, u}-Y_{\innocent, \mino, u})\cdot \xi_{\majo, u} + {Y_{\innocent, \mino, u}\cdot(\xi_{\majo, u}-\xi_{\mino, u})}\Big )\\
	&\quad\quad-\Big ((Y_{\criminal, \majo, u}-Y_{\criminal, \mino, u}) - \xi_{\majo, u}\cdot (Y_{\criminal, \majo, u}-Y_{\criminal, \mino, u})+Y_{\criminal, \majo, u}\cdot(\xi_{\majo, u}-\xi_{\mino, u})\Big )\\
	&\to  (Y_{\innocent, \mino, u}-Y_{\innocent, \majo, u})\cdot(\xi_{\majo, u}-\xi_{\mino, u})
\end{align*}}%
Thus, $\Delta(u)<0 \Leftrightarrow \xi_{\majo, u}>\xi_{\mino, u}.$
\end{proof}

%\section{Inference of race in call for service data}\label{appendix:race.inference}
%As laid out in Section \ref{sec:systemic}, we are able to better characterize the multi-stage process of law enforcement by comparing the stopping rate of a 911-reported majority person under certain circumstance and that of a 911-reported minority person under the same circumstance. Mathematically, we compare $\p[ Y=1, M=1 \,\mid\, R=1, D=\majo, U=u]$ and $\p[Y=1, M=1\,\mid\, R=1, D=\mino, U=u]$, which can be estimated with a data-driven approach.

%Knowing the races of reported and punished individuals is crucial in our analysis. However, given the data we have now, only the 911 calls that result in stops contain the information about the races of reported individuals. Such information is lacking for the calls that do not eventually lead to stops. Thus, we lay out an algorithm to infer the races based solely on the call and stop data. 

\section{More Empirical Results} \label{appendix:statistics}
\subsection{Evaluating $\Delta(u)$}

In Section \ref{sec:empirics}, we present the results of evaluating \eqref{eq:delta.u} to further infer the presence or absence of parity in reporting and policing in real-world police-civilian interactions. For each incident, we associate $Y, M, R, U$ and $D$. We describe how we perform this association here.

\medskip 
\noindent {\em 911-calls $R$.} Every incident that is reported through 911 call corresponds to $R=1$, while incidents not reported through 911 call
correspond to $R=0$. {\color{black} As mentioned in Section \ref{Sec:parity_parity}, using data with $R =1$ is solely for inferring if parity in public reporting or parity in policing is violated, not introducing post-treatment conditioning error.}

\medskip
\noindent{\em Associating $M$.} Each 911 call either results into a police-civilian interaction, i.e. stop ($M=1$), or not ($M=0$). %For each incident, if it results into stop, we denote $M=1$ or else $M = 0$. 

\medskip
\noindent{\em Associating $Y$.} The use of force in the stop incident determines value of $Y \in \{0,1\}$. In particular, for any threshold of use of force in the analysis (e.g. ``Handcuff''), if use of force is equal or more than that (e.g. ``Handcuff'' or more), then we set $Y = 1$, 
else we set $Y = 0$.

\medskip
\noindent{\em Associating $U$.} As per the \systemic framework, the context associated with incident can impact various aspects of the incident. We hypothesize that the context remains unchanged for a given precinct but may change across precincts. Therefore, we evaluate
the test statistics \eqref{eq:delta.u} for each precinct separately. %For further validation of the inference of race discussed in the next paragraph, we also evaluate \eqref{eq:delta.u} on block groups or sectors.

\medskip
\noindent{\em Associating $D$.} Each incident that results into stop has race of the subject recorded in the data. Therefore, for all incidents with $M = 1$, 
we know $D$ precisely. However, for incidents reported in 911 calls but not resulting into stop, i.e. $R = 1$ and $M = 0$, the race is not available in
the data. This requires {\em inferring} race for such records. %We provide brief description of it next. Details can be found in Appendix \ref{appendix:race.inference}.

To that end, we assume for any incident type, the racial composition among incidents with $R=1, M=0$ and that among $R=1, M=1$ remains the same within a given context.
%\label{assump:race_inference_loc_type}

\begin{figure}[h!]
\centering\includegraphics[width=1\columnwidth]{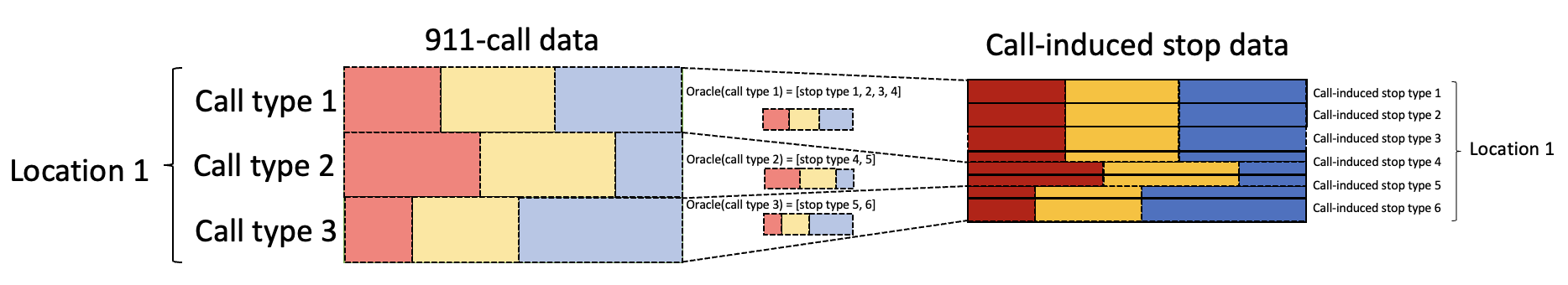}
\caption{Inference of race based on location and type %TODO update algorithm
}
\label{fig:race_inference_loc_type}
\end{figure}

Fig. \ref{fig:race_inference_loc_type} illustrates how the assumption works. For example, at context 1, as we know incidents with call type 1 often lead to stops with stop type 1, 2, 3 and 4, then we assume the racial composition of incidents with call type 1 matches with that of the incidents with stop type 1, 2, 3 and 4. {\color{black} Note that our assumption above is different from the claim that among the reported incidents $R=1$, race $D$ is independent of the stop decision $M$ given context $U$ as it is highly possible that calls leading to stops and calls not leading to stops can have vastly different compositions of incident types. } We acknowledge that even though such inference of race has utilized all the information embedded in the data available to us, it is clearly not yet perfect. For example, it is possible for some 911 call takers, if the suspects' races are mentioned in the calls, they may be more responsive to the calls involving certain races than other ones. We thus, as a next step, propose to collect recordings and scripts of 911 calls to recover the true labels of the races of reported individuals, thereby having more accurate estimations. 

We now provide an algorithm for this inference task. For each context, we repeat the following process:

{\color{black}
\medskip

\fbox{%
  \parbox{0.96\textwidth}{\indent For each call type:
\begin{enumerate}
	\item Find the matching stop types.
	\item Calculate the racial composition of all call-induced stops with the matching stop types from step (1) and use it as the racial composition of all calls with this call type.
	\item Estimate the number of involved civilians of different races for each call type based on the total number of calls with this call type and the racial composition from step (2).
\end{enumerate}
\indent Sum over all call types to calculate the overall racial composition at each context.
  }%
}
\\

With this algorithm, we are able to estimate the number of civilians of different races who are involved in the calls and stops. In this way, we are able to calculate the number of 911-reported majority and minority civilians and the number of punished majority and minority individuals, both of which are essential in our analysis.

In this way, the evaluation is based on empirical data and in effect
boils down to evaluating parameters of two Bernoulli random variables, $p(\majority, u) = \p[ Y=1, M=1 \,\mid\, R=1, D=\majority, U=u]$ and 
$p(\minority, u) = \p[ Y=1, M=1 \,\mid\, R=1, D=\majority, U=u]$. Let $\hat{p}(\majority, u)$ and $\hat{p}(\minority, u)$ be estimates. Then 
the standard errors are
\begin{align*}
\se(\majority, u) & = \sqrt{\hat{p}(\majority, u) (1-\hat{p}(\majority, u))} \nonumber \\
\se(\minority, u) & = \sqrt{\hat{p}(\minority, u) (1-\hat{p}(\minority, u))}.
\end{align*}
Therefore, we have 
\begin{align*}
\hat{\Delta}(u) & = \hat{p}(\majority, u) - \hat{p}(\minority, u)
\end{align*}
and its $95\%$ confidence interval is given by
\begin{align*}
	\hat{\Delta}(u) \pm 1.96 \sqrt{\frac{\se^2(\majority, u)}{\text{n}(\text{majority}, u)} + \frac{\se^2(\minority, u)}{\text{n}(\text{minority},u)}}
\end{align*}
 where $\text{n}(\text{majority}, u)$ and $\text{n}(\text{minority}, u)$ represent the number of 911-reported majority and minority civilians in a given context $U=u$ respectively.

The empirical results of NYC and New Orleans are presented in Section \ref{sec:empirics}. We also include the statistics summary of data below in Table \ref{table:stats_nyc} and \ref{table:stats_no}. We also present additional empirical results for Seattle.

 \begin{table}[!h]
\centering
\begin{tabular}{|c|c|c|c|}
\hline
Year                                            & 2018     & 2019     & 2020     \\ \hline
911 calls                       & 6,447,122  & 6,640,911  & 6,421,740  \\ \hline
Stops                           & 11,006    & 13,459    & 9,544     \\ \hline
Call-induced stops                    & 6,373     & 8,333     & 6,424     \\ \hline
$\mathbb{P}(\text{stop} \mid \text{call})$             & 0.10\%  & 0.13\%  & 0.10\%  \\ \hline
$\mathbb{P}(\text{call} \mid \text{stop})$    & 57.90\% & 61.91\% & 67.31\% \\ \hline%\hline
%Number of stitched call-induced stops           &  5987        &  7893        &   6016       \\ \hline
%Percent of call-induced stops that are stitched & 93.943\%    &  94.720\%        &     93.649\%     \\ \hline
\end{tabular}
\caption{Basic statistics of NYC 911 call data and stop data.}
\label{table:stats_nyc}
\end{table}

\begin{table} [h!]
\centering
\scalebox{0.85}{
\begin{tabular}{|c|c|c|c|c|c|c|c|}
\hline
Year &  2014 & 2015 & 2016 & 2017 & 2018     & 2019     & 2020     \\ \hline
911 calls  &  303,599 & 307,563 & 314,266 & 325,389 & 310,304 & 327,690 & 271,868 \\ \hline
stops & 52,144 & 66,754 & 47,050 & 55,783 & 59,995 & 50,538 & 17,678\\ \hline
call-induced stops & 17,614 & 23,829 & 21,611 & 25,084 & 21,408 & 18,464 & 6,736\\ \hline
$\mathbb{P}(\text{stop} \mid \text{call})$ & 5.80\% &  7.75\% & 6.88\% & 7.71\% & 6.90\% & 5.63\% & 2.48\%\\ \hline
$\mathbb{P}(\text{call} \mid \text{stop})$ & 33.78\% & 35.70\% & 45.93\% & 44.97\% & 35.68\% & 36.53\% & 38.10\%\\ \hline
\end{tabular}}
\caption{Basic statistics of New Orleans 911 call and stop data.}
\label{table:stats_no}
\end{table}

 \subsection{Seattle} \label{appendix:seattle}

{\em Data.} Seattle is divided into 5 police districts, namely, North, East, South, West and Southwest. 
After data pre-processing, there are $722,863$
%$840,960$ 
total 911 calls-for-service and $29,743$
%$35,521$ 
stops over the years 2015 to 2019. Among the stops, $16,414$
%$19,650$ 
are induced by 911 calls. 
In this sense, $2.27\%$ of 911 calls result in stops and $55.19\%$ are induced from by 911 calls. 
This again suggests that only a small portion of 911 calls end up in stops but the majority of stops are from 911 calls. 
The detailed year-to-year data statistics are 
summarized in Table~\ref{table:stats_seattle}. As before, 
we consider {\em white} as majority race while {\em Black} or {\em Hispanic} as minority race in the data analysis. We focus on two policing outcomes specified in the data, namely, frisk and arrest. 

\begin{table}[h]
\centering
\begin{tabular}{|c|c|c|c|c|c|}
\hline
Year &  2015 & 2016 & 2017 & 2018     & 2019     \\ \hline %& 2020     
911 calls & 147,259 & 143,729 & 146,735 & 144,946 & 140,194 \\ \hline %& 118,097
Stops  &  5,022 & 5,495 & 5,485 &  6,864 & 6,877 \\ \hline %& 5,778
Call-induced stops & 2,797 & 2,981 & 3,104 & 3,744 & 3,788 \\ \hline %& 3,236
$\mathbb{P}(\text{stop} \mid \text{call})$  & 1.90\%& 2.07\% & 2.12\% & 2.58\%& 2.70\% \\ \hline %2.74\%
$\mathbb{P}(\text{call} \mid \text{stop})$ & 55.69\%& 54.25\% & 56.59\% & 54.55\% & 55.08\% \\ \hline %& 56.01\%
\end{tabular}
\caption{Basic statistics of Seattle 911 call data and stop data.}
\label{table:stats_seattle}
\end{table}

\begin{comment}
	The statistics of different policing outcomes are shown in Table \ref{table:no_policing_outcome}.
\begin{table}[H]
\centering
\begin{tabular}{|l|c|c|c|c|c|c|}
\hline
Year & 2015 & 2016 & 2017 & 2018     & 2019     & 2020     \\ \hline
Frisk & 1,133 & 1,288 & 1,244 & 1,509 & 1,735 & 1,457\\ \hline
Arrest & 1,334 & 1,492 & 1,596 & 2,277 & 2,222 & 1,528\\ \hline
\end{tabular}
\caption{The statistics of different types of uses of force in stop data in Seattle.}
\label{table:seattle_policing_outcome}
\end{table}
\end{comment}

\noindent{\em Findings.} The summary findings for Seattle are included in Fig. \ref{fig:seattle_pct_results}. The blue bars and green bars represent the proportions of precincts with statistically significant negative $\Delta(\cdot)$s and positive $\Delta(\cdot)$s. We remove the 2015 arrest, 2017, and 2019 White versus Hispanic results due to a lack of statistically significant results. % and \ref{fig:seattle_pct_results_density}. %The further validation of sector level analysis is presented in Figure \ref{fig:seattle_sector_results} where for white versus black, 85.71\% of the citywide results based on inference by sectors are the same as that by precincts and for white versus Hispanic, 100\%. %More details can be found in Tables \ref{table:seattle_2015}, \ref{table:seattle_2016}, \ref{table:seattle_2017}, \ref{table:seattle_2018},  \ref{table:seattle_2019}, \ref{table:seattle_2020}, \ref{table:seattle_2015-2020} in Appendix Section \ref{sec:seattle_tables}.
For Seattle, like NYC, we find it fairly consistent that $\Delta(\cdot) < 0$, exhibiting observational bias in law enforcement against the minority. This seems to suggest that the primary source of such observational racial disparity against minorities could be bias in policing actions against minorities. 
%(i) if public reporting and policing actions with respect to criminals are unbiased, then the primary source of bias 
%is that against minority innocents in policing actions; (ii) if public reporting and policing actions with 
%respect to innocents are unbiased, then the primary source of bias is that against minority criminals in policing actions;
%and (iii) if policing actions with respect to innocents and criminals are unbiased, then the primary source of bias 
%is that against majority in public reporting.  

\begin{figure}
\centering
\begin{subfigure}{\textwidth}
  \centering
  \includegraphics[width=0.8\linewidth]{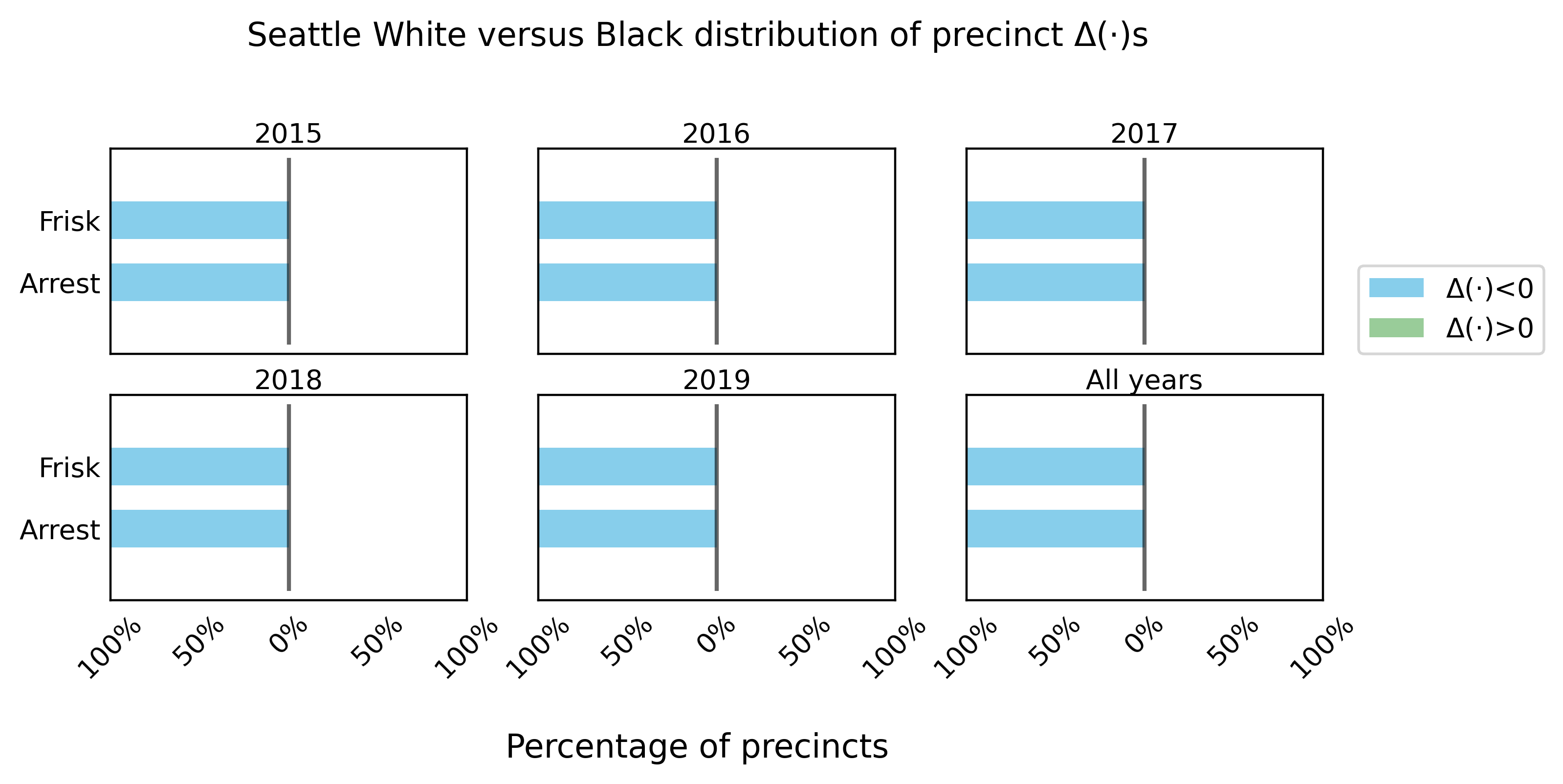}
  \caption{~}
  \label{fig:seattle_pct_black}
\end{subfigure}%

\begin{subfigure}{\textwidth}
  \centering
  \includegraphics[width=0.8\linewidth]{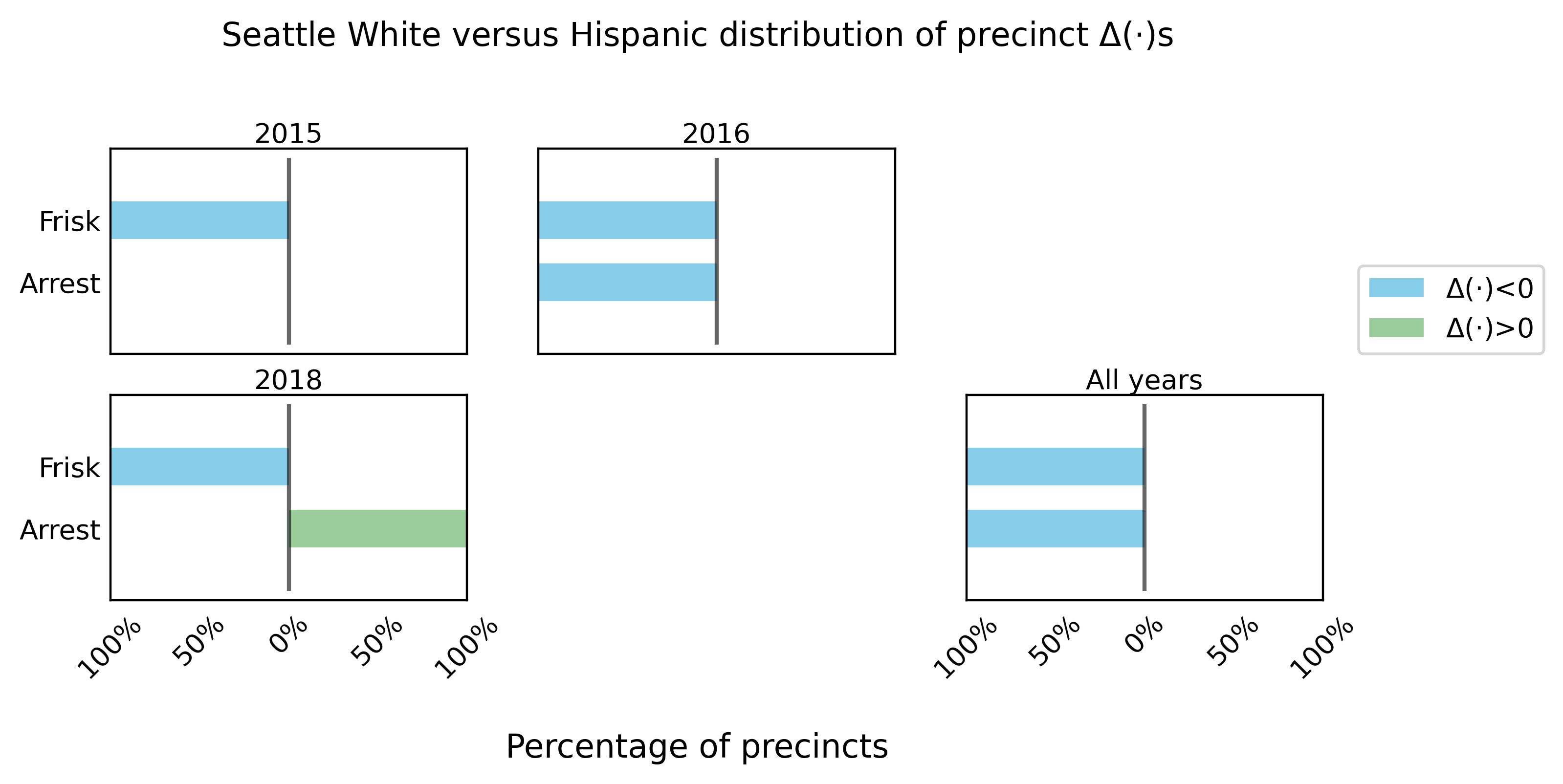}
  \caption{~}
  \label{fig:seattle_pct_hisp}
\end{subfigure}
\caption{A summary of statistically significant sign of $\Delta(\cdot)$ across
  all police districts in Seattle for different policing actions with (a) White versus Black, and (b) White versus Hispanic. Plots of 2015 Arrest, 2017 and 2019 in (b) are removed due to a lack of statistically significant results.}
\label{fig:seattle_pct_results}
\end{figure}

\end{document}